\newcommand{\jac}{\textsf{jac}}
\newcommand{\B}{\mathcal{B}}
\newcommand{\FF}{\mathbb{F}}
\newcommand{\EE}{\mathbb{E}}
\newcommand{\rank}{\textsf{Rank}}
\newcommand{\x}{\textbf{x}}
\newcommand{\y}{\textbf{y}}
\newcommand{\wit}{\textsf{wit}}
\newtheorem{theorem}{Theorem}
\newtheorem{corollary}[theorem]{Corollary}
\newtheorem{conjecture}[theorem]{Conjecture}
\newtheorem{proposition}[theorem]{Proposition}
\newtheorem{definition}[theorem]{Definition}
\newtheorem{example}[theorem]{Example}
\newtheorem{remark}[theorem]{Remark}
\begin{document}

\begin{frontmatter}

\title{
On the Complexity of Solving Generic Over-determined Bilinear Systems
}

\author{John B. Baena}
\address{jbbaena@unal.edu.co }
\ead{jbbaena@unal.edu.co }

\author{Daniel Cabarcas}
\address{dcabarc@unal.edu.co }
\ead{dcabarc@unal.edu.co }

\author{Javier Verbel}
\address{javerbelh@unal.edu.co}
\ead{javerbelh@unal.edu.co}


\begin{abstract}
In this paper, we study the complexity of solving generic over-determined bilinear systems over a finite field $\FF$.
Given a generic bilinear sequence $B\in \FF[\x,\y]$, with respect to a partition of variables $\x$, $\y$, we show that, the solutions of the system $B = \textbf{0}$ can be efficiently found on the $\FF[\y]$-module generated by $B$.
Following this observation, we propose three variations of Gr\"{o}bner basis algorithms, that only involve multiplication by monomials in the $\y$-variables, namely, $\y$-XL, based on the XL algorithm, $\y$-MLX, based on the mutant XL algorithm, and $\y$-HXL, based on a hybrid approach.
We define notions of regularity for over-determined bilinear systems, that capture the idea of genericity, and we develop the necessary theoretical tools to estimate the complexity of the algorithms for such sequences.
We also present extensive experimental results, testing our conjecture, verifying our results, and comparing the complexity of the various methods.
\end{abstract}

\begin{keyword}
Bilinear systems, $\y$-Degree of regularity, Complexity.
\end{keyword}
\end{frontmatter}

\section{Introduction}


The problem of solving systems of polynomial equations has many important applications all over science and engineering.
The main abstraction to tackle the problem is the Gr\"{o}bner basis \citep{Buchberger_thesis}.
Due to its importance, the past three decades have seen qualitative improvements in the algorithms to solve the problem and in the understanding of its complexity \citep{Faugere, faugere2, XL_alg,bardet_theses,mxl1,MGB}.

The complexity of the problem is very sensitive to the structure of the system.
For generic systems (characterized as regular or semi-regular sequences) the complexity of the problem is well understood both classical \citep{bardet_theses}, and quantum complexity \citep{Grover-FHKKKP2017,Grover-BY18}.
There are also several works that adapt to very particular types of systems and analyze their complexity \citep{KS1999,sparsegroebnerunmixed,sp_minrank,bardet2020algebraic}.

In this paper, we study the complexity of solving determined and over-determined bilinear systems over a finite field.
More precisely, we are interested on systems of the form $B = \textbf{0}$, where $B$ is a sequence of $m$ bilinear polynomials in $n$ variables, with coefficients on a finite field $\FF$, and such that $n \leq m$.
By bilinear we mean that, there is a partition of the variables $\x = (x_1,x_2, \ldots , x_{n_x})$, $\y = (y_1,y_2, \ldots , y_{n_y})$ such that the quadratic part of every equation is some polynomial $f(\x,\y)$ such that for all $\lambda,\mu$, $f(\lambda\x,\mu\y)=\lambda\mu f(\x,\y)$.

\subsection{Our Contribution}

The key observation that drives this work is that, for a generic bilinear system $B = \textbf{0}$, its solutions can be efficiently found on the $\FF[\y]$-module generated by $B$, denoted by $I_{\y}(B)$.
This is in contrast to the typical approach of looking for a solution on the ideal generated by $B$.

Based on this observation, we propose variations of Gr\"{o}bner basis algorithms that only involve multiplication by mononials in the $\y$-variables.
We propose three such algorithms, $\y$-XL, based on the XL algorithm \citep{xl}, $\y$-MLX, based on the mutant XL algorithm \citep{DCabarcas}, and $\y$-HXL, based on the so called hybrid approach \citep{booleansolve}.

In order to analyze the complexity of these algorithms, we study the structure of $I_{\y}(B)$.
By looking at the Jacobian of $B$ with respect to $\x$, we show that in the determined and over-determined cases there are non-trivial syzygies of degree strictly less than $n_x$ in $\FF[\y]^{m}$.

We define a notion of $d$-regularity for homogeneous bilinear sequences by focusing on the $\FF[\y]$-module $I_{\y}(B)$.
In the same line, we define a notion of degree of regularity that captures the idea of the minimum degree at which the Hilbert polynomial equals the Hilbert function, but looking only at $I_{\y}(B)$ instead of at the ideal.
Supported on this notion of degree of regularity, we then define a notion of $\y$-semiregularity, applicable to determined or over-determiend bilinear sequences, as being $d$-regular for as high a degree $d$ as possible.
We compute this degree of such $\y$-semiregular sequences.
And we conjecture, based on extensive experimental evidence, that, for fixed parameters $n_x$, $n_y$, and $m$, there exists an open Zariski set $O$, contained in the set of all homogeneous bilinear sequences, such that every sequence in $O$ has this $\y$-semiregularity property.

The degree of regularity is an important value for measuring the complexity of Gr\"{o}bner basis algorithms, but it is not the only one.
Assuming a sequence $B$ is $\y$-semiregular, we also calculate the analogous of first fall degree
and witness degree in $I_{\y}(B)$. A subtle yet important contribution of this paper is a careful and clear explanation of each of these three degrees and the role they play in the complexity of different algorithms.

We then compute the complexity of the three proposed algorithms for $\y$-semiregular sequences.
We estimate that $\y$-XL solves the system in 
\begin{equation*}
 \mathcal{O}\left(m\binom{n_{y} +  \tilde{d} -2}{\tilde{d} -2} \left[n_x \binom{n_y + \tilde{d}-1 }{\tilde{d}-1}\right]^{\omega -1} \right)
 \end{equation*}
multiplications over $\FF_{q}$, where $2\leq \omega \leq 3$ is the linear algebra constant and  
\begin{equation*}
    \tilde{d} = \left\lceil \frac{n_y(n_x+1)}{m-n_x-1} \right\rceil + 1.
\end{equation*}
The complexity of $\y$-MXL is
 \begin{equation*}
    \mathcal{O}\left(m \binom{n_{y} + d_{\y -ff}(B)-2}{d_{\y -ff}(B)-2} \left[n_{x}\binom{n_{y} + d_{\y -ff}(B)-1}{d_{\y-ff}(B)-1}\right]^{\omega-1} \right)
\end{equation*}
multiplications over $\FF_{q}$, where  \begin{equation*}
    d_{\y - ff}(B) = \min \left\lbrace d\in \mathbb{Z} ^{+} \; \vert \;  d> \frac{n_{x}(n_{y}-1)}{m-n_{x}} + 1\right\rbrace .
\end{equation*}
For $a_x, a_y$ fixed, the complexity of $\y$-HLX$_{a_x,a_y}$ using Weidemann's algorithm is given by 
\begin{align*}
    \mathcal{O}\left(q^{a_x + a_y} (n_y - a_y+1)(n_x - a_x +1)^{3}\binom{n_y - a_y + \tilde{d} -1}{\tilde{d} -1}^{2} \right) 
\end{align*}
and using Gaussian elimination is
\begin{align*}
 \mathcal{O}\left(q^{a_x + a_y} m\binom{n_{y} -a_y +  \tilde{d} -2}{\tilde{d} -2} \left[(n_x -a_x +1) \binom{n_y -a_y  + \tilde{d}-1 }{\tilde{d}-1}\right]^{\omega -1} \right) 
\end{align*}
where $\tilde{d}$ is is given by
\begin{equation*}
\tilde{d} = \left\lceil \frac{(n_y-a_y)(n_x - a_x+1)}{m-n_x +a_x-1} \right\rceil +1.
\end{equation*}

We finally show extensive experimental evidence, testing our conjecture, verifying our results, and comparing the complexity to that of out-of-the-box algorithms.

\subsubsection{Related Work}

There are several works that have studied the solvability and complexity of the problem of finding solutions for bilinear systems of equations over any field $\mathbb{E}$.
 To the best of our knowledge, the first specialized methods for solving bilinear systems date from late 90's, with a work by \cite{CohenTomasi}. They studied the solvability of bilinear systems when $\mathbb{E}$ is the field of real numbers and proposed an algorithm to solve them

\cite{vinh2009solvability} studied the solvability of bilinear systems over finite fields and provided estimates for the number of solutions. The same year, Johnson and Link proposed an algorithm for solving bilinear systems over any field $\mathbb{E}$ \citep{JohnsonLink2009}. This is a deterministic and very efficient algorithm when $m = n_x n_y$.
However, the algorithm is probabilistic and, according to  our judgment, not efficient for $m < n_x n_y$. Based on the ideas of Johnson and Link,
\cite{DianYang_thesis} proposed an algorithm for solving bilinear systems over any field $\EE$, with $m< n_x n_y$. For this goal, a generic MinRank problem with $n_x n_y -m +1$ matrices over $\EE$ with target rank 1 needs to be solved  \citep[Sec.~2.6]{DianYang_thesis}.
They do not provide complexity estimates.

The complexity of solving a bilinear system over a finite field $\FF$ via  Gr\"{o}bner basis algorithms is analyzed in \citep{bihonogeneous2011}. They use the F5 algorithm, extending the F5 Criterion to avoid reductions to zero during the Gr\"{o}bner basis computation for bilinear ideals.
The extended criterion is named BILINF5CRITERION, and it works for the under-determined and determined cases, i.e, $m\leq n_x + n_y$ \citep[Prop.~1]{bihonogeneous2011}. They also provided an upper bound for the degree of regularity, which is used to estimate the complexity of computing a Gr\"{o}bner basis for the zero-dimensional and determined bilinear systems. Their estimate is
\begin{equation*}
    \mathcal{O}\left(\binom{n_x + n_y + \min\{n_x +1 ,n_y +1 \}}{\ min\{n_x +1 ,n_y +1 \}} ^{\omega }\right).
\end{equation*}
There are two main differences between the approach followed by Faugère et al. and the one presented in our work.

The first one is that they analyze the behavior of computing a Gr\"{o}bner basis algorithm for the ideal $I \subset \FF[\x,\y]$ generated by bilinear equations, while in this paper we analyze the same behavior but for the $\FF[\y]$-module, $I_{\y}$, generated by bilinear equations. The second difference is that in \citep{bihonogeneous2011} the complexity estimates are only meaningful when $m = n_x + n_y$ but not for the case $n _{x} + n_y \leq m$ as in this paper.

\cite{sparsegroebnerunmixed} considered a Gr\"{o}bner basis algorithm that does not use all monomials from a polynomial ring.
This method is applicable when the initially given polynomials are sparse and with the same support.
They use only monomials that appear in the support of the initial polynomials. For bilinear systems in $\FF[\x,\y]$ that means multiplying by monomials of degree $2d$ formed by a monomial of degree $d$ in $\FF[\x]$ and a monomial of degree $d$ in $\FF[\y]$.
This leads to a completely different approach to the one discussed in the present paper.

In cryptography, the security of several schemes can be broken via solving a system of bilinear equations \citep{KS1999,ZHFE_minrank, KStoHFE-, bardet2020algebraic}.
The complexity of solving such systems has been studied in \citep{bihonogeneous2011,sp_minrank,bardet2020algebraic}.
\cite{sp_minrank} and \cite{bardet2020algebraic} proposed two different modified Gr\"{o}bner basis algorithms for solving these particular bilinear systems, which only involve one group of variables during the Gr\"{o}bner basis computation. The complexity of these algorithms relies on the structure of the $\FF[\y]$-module $I_{y}$, which in these cases is generated by very particular bilinear equations. In this paper, we consider generic bilinear equations. We do not expect that the results presented here provided a tight upper bound for the bilinear systems considered in \citep{sp_minrank,bardet2020algebraic}. Instead, the present work generalizes the ideas of these papers.
\section{Preliminaries}

\subsection{Notation and Basic Definitions}
Throughout this paper, we adopt the following notation.

\begin{itemize}
\item $\FF $ denotes the field with $q$ elements.

\item  $\FF^{a\times b}$ denotes the ring of matrices of size $a\times b$ with entries in $\FF$. We use bold uppercase letters to denote matrices. Similarly, $\FF^c$ denotes the space of all vectors of length $c$ with entries in $\FF$, and we use bold lowercase letters to denote vectors. The entry of a matrix $\mathbf{A}$ indexed by $(i,j)$ is denoted by $\mathbf{A}[i,j]$.

\item We distinguish two sets of variables, the $\x$-variables and the $\y$-variables, represented respectively by the tuples $\x = (x_1.\ldots, x_{n_x} )$ and $\y = (y_1,\ldots , y_{n_y})$, with $n_{x} \leq n_{y}$.

\item The polynomial ring in $\x$-variables and $\y$-variables over $\FF$ is denoted by $\FF[\x,\y]$. It is doted with graded lexicographic monomial order, where $x_1>\cdots > x_{n_x} > y_1>\cdots > y_{n_y}$. $\FF[\x]$ (\textit{resp.} $\FF[\y]$) denote the subring of polynomials over $\FF$ in the $\x$-variables (\textit{resp.} $\y$-variables).

\item The \textit{degree of a sequence} of polynomials $S\in\FF[\x]^m$ is the maximum degree of the polynomials in $S$.

 
 \item $\FF[\x,\y]_{\alpha,\beta}$ denotes the set of homogeneous polynomials in $\FF[\x,\y]$ of degree $\alpha + \beta$, such that each of their monomials has degree $\alpha$ in the $\x$-variables and degree $\beta$ in $\y$-variables.
 
 \item  Let $S$ be a sequence of polynomials in $\FF[\x,\y]^{m}$. We say $S$ is 
 \begin{enumerate}
     \item  \textit{An under-determined sequence: } if $m < n_x + n_y$.

     \item \textit{A determined sequence: } if $m = n_x + n_y$.
     
     \item \textit{An over-determined sequence: } if $m > n_x + n_y$.
  \end{enumerate}
\end{itemize}

\begin{definition}[Jacobian]
Let $S =(f_1,\ldots, f_m)$ be a sequence of polynomials in $\FF[\x,\y]$. The Jacobian of $S$ with respect to $\x$ is the matrix $\jac_{\x}(S) \in \FF[\x,\y]^{m \times n_x}$ defined by  
\begin{displaymath}
\jac_{\x}(S)[i,j] = \frac{\partial f_i }{\partial x_j}. 
\end{displaymath}
The Jacobian $\jac_{\y}(S)$ is defined analogously.
\end{definition}

\begin{definition}[Zariski Topology]
The Zariski topology on $\FF^{k}$ is the topology whose closed sets are the algebraic subsets of $\FF^{k}$, i.e., all sets of the form 
$\left\lbrace \textbf{a} \in \FF^{k}\; \vert \; g(\textbf{a}) = 0, \; \forall g \in S \right\rbrace $ for some $S \subset \FF[x_1,\ldots, x_k]$.
\end{definition}

An open Zariski set is the complement of an algebraic set. Each open set is expected to be very large compared to $\FF^{k}$ because it is dense in $\FF^{k}$ \citep{Hartshorne1977}.

\subsection{Solving Overdetermined Quadratic Systems over Finite Fields}\label{sec:poly_solving}


The problem of solving a system of polynomial equations over a size $q$ field $\FF$ is closely related to the problem of finding a \textit{Groebner Basis} for an ideal in $\FF[\x]$. For instance, when a sequence of polynomials $S \in \FF[\x]^m$ has a unique solution over $\FF$, this solution can be found by finding a Groebner basis for the ideal generated by $S \cup \left\lbrace x_{i}^q - x_{i}\right\rbrace _{i = 1}^{n_x} $.

The most efficient Groebner basis algorithms (eg. XL \citep{xl}, Mutant XL \citep{DCabarcas} and F4 \citep{MPK:F:F4} follow an idea first explored by \cite{Lazard83}. They amass most of the computation on finding a staggered basis for the row space of the Macaulay matrix of $S \cup \left\lbrace x_{i}^q - x_{i}\right\rbrace _{i = 1}^{n_x} $.

\begin{definition}[Macaulay matrix]
The  Macaulay matrix of degree $d$ of a sequence of polynomials $S \in \FF[\x]^m$, denoted  $\textbf{M}_{\leq d}(S)$, is the matrix formed by the coefficients of all polynomials of the form $\mathfrak{m} f$, where $f \in S$ and $\mathfrak{m} \in \FF[\x]$ is a monomial of degree at most  $d-\deg(f)$. The columns of the matrix correspond to the monomials produced in all the products $\mathfrak{m} f$, and sorted in decreasing order with respect to the grevlex ordering. For $2 \leq j\leq d$, we use $\textbf{M}_{j}(S)$  to denote the row submatrix of $\textbf{M}_{\leq d}(S)$ formed by taking only monomials $\mathfrak{m} \in \FF[\x]$ with degree exactly $j-2$. We refer to $\textbf{M}_{j}(S)$ as the degree $j$ part of the Macaulay matrix $\textbf{M}_{\leq d}(S)$.
\end{definition}


The complexity of this kind of Gr\"{o}bner basis algorithms depend on the degree up to which the Macaulay matrix must be constructed, often called the \textit{solving degree}.
In certain cases, the solving degree can be approximated by
values that do not depend on the algorithm used, but only on the ideal $\langle S \rangle$ itself, for instance, the \textit{degree of regularity}, and the \textit{first fall degree}.

\begin{definition}[Degree of regularity]
 Let $S$ be a sequence of homogeneous polynomials over $\FF[\x]^m$. The degree of regularity of $S$ is defined as
 \begin{displaymath}
     d_{reg}(S) = \min \left\lbrace d\in \mathbb{Z} ^{+} \;\vert \; \dim\left(I_{d} \right) = \dim \left(\FF[\x]_{d}\right) \right\rbrace,
 \end{displaymath}
 where $I_{d}$ and $\FF[\x]_{d}$ are the $\FF$-vector spaces consisting of degree $d$ polynomials in $\langle S\rangle $ and $\FF[\x]$, respectively. If $S$ is not homogeneous, and $\tilde{S}$ is the homogeneous part of $S$ of highest degree, then $d_{reg}(S) := d_{reg}(\tilde{S})$.
\end{definition}

An equivalent way of defining $d_{reg}(S)$ is as the minimum integer $d$ such that the dimension of $I_{d}$ is $\binom{n_x  +d -1}{d}$. 
Syzygies and trivial syzygies are crucial concepts in understanding the complexity of Groebner basis algorithms.

\begin{definition}[Syzygy]
A syzygy of a sequence of polynomials $(f_1,\ldots , f_m) \in \FF[\x]^{m}$ is another sequence of polynomials $(g_1, \ldots ,g_{m}) \in F[\x]^{m}$ satisfying $\sum _{i=1}^{m} g_i f_i = 0.$
\end{definition}

\begin{definition}[Trivial Syzygy]
Let $S = (f_1,\ldots, f_m) \in \FF[\x]^{m}$. For any $1\leq i < j \leq m$, let $T_{i,j} = f_{i}\textbf{e}_{j} - f_{j}\textbf{e}_{i} $,
where $\textbf{e}_{i}$ is the $i$-th canonical basis vector.
Any element in the $\FF[\x]$-module generated by the vectors $\left\lbrace T_{i,j} \; \vert\; 1\leq i< j\leq m \right\rbrace
$ is called a trivial syzygy.
\end{definition}

The degree of regularity can be determined for a large family of sequences called semiregular. These are sequences with no relations among their polynomials except the trivial ones.

\begin{definition}[semiregular] 
A sequence $S = (f_1,\ldots ,f_m)\in \FF[\x]^m$ is called $d$-regular if for each $g \in \FF[\x] $ and for all $1\leq i \leq m$, the facts $gf_i \in \langle f_1,\ldots, f_{i-1}\rangle$ and $\deg(g f_i) <d$ imply $f_i \in \langle f_1,\ldots, f_{i-1}\rangle$.
A \textit{semiregular} sequence $S$ is a sequence that is $d_{reg}(S)$-regular.
\end{definition}
When $S$ is a sequence of homogeneous quadratic polynomials, $S$ is semiregular, if and only if, all the syzygies of $S$ of degree less than $d_{reg}(S) -2$ are trivial \citep[Prop.~ 6]{Bardet_asymptoticbehaviour}.

\begin{definition}[First Fall Degree]
 Let $S = (f_{1},\ldots, f_{m})$ be a sequence of quadratic polynomials in $\FF[\x]$. Let $\tilde{S}$ be the sequence of polynomials formed by the homogeneous part of largest degree of each polynomial in $S$. The first fall degree of $S$, denote by $d_{ff}(S)$, is  the minimum integer $d$ such that $\tilde{S}$ has a non-trivial syzygy of degree $d-2$.    
\end{definition}


For most quadratic sequences $S \in \FF[\x]^m$ the complexity of computing a Groebner basis for the ideal $I = \left\langle S\right\rangle $ is given by  
\begin{equation*}
    \mathcal{O}\left( \binom{n_{x} + d}{d}^\omega \right),
\end{equation*}
with $d=d_{reg}(S) = d_{ff}(S)$.
This is because most quadratic sequences $S$ have semiregular quadratic part $\tilde{S}$. In this case, there are no non-trivial degree falls below $d=d_{reg}(S)$. Furthermore, $\textbf{M}_{d}(\tilde{S})$ spans $\FF[\x]_d$, hence $d_{reg}(S) = d_{ff}(S)$, and also, for $k>d$, the leading term of any polynomial in $I_{k}$ is divisible by some polynomial in the row space of $\textbf{M}_{d}(S)$. 


\subsection{Bilinear Sequences over Finite Fields}

\begin{definition}[Bilinear Polynomial] \label{def:bilinear_seq}
 A quadratic polynomial $f \in \FF[\x,\y]$ is called bilinear with respect to $\x$ and $\y$ if it can be written as
 $$f = \x \textbf{A} \y ^{\top} + \textbf{b}\x ^{\top} + \textbf{c}\y ^{\top} + d ,$$
 where $\textbf{A}\in \FF^{n_x \times n_y} $, $\textbf{b} \in \FF^{n_x}, \textbf{c} \in \FF ^{n_y}$ and $d \in \FF$.
We use $\B(n_{x},n_{y},m)$ to denote the set of all length $m$ bilinear sequences in $\FF[\x,\y]$, where there are $n_{x}$ $\x$-variables and $n_{y}$ $\y$-variables. The subset of $\B(n_{x},n_{y},m)$ consisting only of homogeneous sequences is denoted by $\B^{(h)}(n_{x},n_{y},m)$.
\end{definition}


We now define the notion of generic bilinear sequences, which captures the properties of a sequence that only depend on the sequence being bilinear, without considering the particular coefficients appearing in the sequence. For instance, a property over $\FF[\x]$ that does not depend on the particular coefficients used is: any two polynomials $f_1,f_2 \in \FF[\x]$ satisfy $f_1f_2 - f_2 f_1 = 0$. These kinds of properties are called \textit{generic bilinear properties}, meaning that they are satisfied by generic bilinear sequences.

\begin{definition}
Let $\mathfrak{a}$ denote the set of parameters
$$ \{\mathfrak{a}_{{i,j}}^{k} \; \vert \; 1\leq i\leq n_x, \; 1\leq j \leq n_y, \; 1\leq k \leq m\} \cup \{\mathfrak{a}_\ell^{k} \; \vert \; 0\leq \ell \leq n_x + n_y, \; 1\leq k \leq m \}, $$
and let $\FF(\mathfrak{a})$ denote the ring of fractions of the polynomial ring $\FF[\mathfrak{a}]$. A generic bilinear sequence $\B(\mathfrak{a}) \in \FF(\mathfrak{a})[\x,\y]^{m}$ is a sequence of polynomials $(f_1,\ldots ,f_m)$, where for each $1\leq k \leq m$
$$ f_k = \sum_{i,j =1}^{n_x,n_y} \mathfrak{a}_{i,j}^{k} x_{i}y_{j} + \sum  _{i =1}^{n_x } \mathfrak{a}_{i} ^{k} x_{i} + \sum _{j=1}^{n_y} \mathfrak{a}_{n_x +j}^{k}y_{j} + \mathfrak{a}_{0}^{k}.$$
We say $\B(\mathfrak{a})=(f_1,\ldots f_m)$ is a generic homogeneous bilinear sequence if
$$ f_k = \sum_{i,j =1}^{n_x,n_y} \mathfrak{a}_{i,j}^{k} x_{i}y_{j}.$$
\end{definition}


The following propositions, introduced by \cite{sp_minrank}, highlight two generic bilinear properties.


\begin{proposition}\label{prop:syzygies_jac}
Let $\B(\mathfrak{a}) = \left(f_{1},f_{2},\ldots,f_{m}\right)$ be a generic homogeneous bilinear sequence  in $\FF(\mathfrak{a})[\textbf{x},\textbf{y}]^m$. Suppose $\mathcal{G} = (g_{1},g_{2},\ldots , g_{m})$ is a sequence in $\FF[\textbf{y}]^m$, then,
$
    \sum _{i=1}^m g_{i}f_{i} = 0
$,
if and only if, $\mathcal{G}^{\top}$ belongs to the left-kernel of $\jac_{\textbf{x}}(\B(\mathfrak{a}))$.
\end{proposition}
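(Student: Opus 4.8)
The plan is to exploit the fact that a homogeneous bilinear polynomial is homogeneous of degree exactly one in the $\x$-variables, so that Euler's identity lets me recover $f_k$ from the $\x$-partial derivatives, and then to reduce the vanishing of $\sum_k g_k f_k$ to the vanishing of a vector of $\y$-polynomials that is literally $\mathcal{G}^\top \jac_{\x}(\B(\mathfrak{a}))$. The only genuine input beyond bookkeeping is linear independence of monomials over the coefficient field $\FF(\mathfrak{a})$.

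First I would write each generator in factored form $f_k = \sum_{i=1}^{n_x} x_i L^k_i(\y)$, where $L^k_i(\y) = \sum_{j=1}^{n_y}\mathfrak{a}^k_{i,j} y_j \in \FF(\mathfrak{a})[\y]$; this is immediate from the definition of a generic homogeneous bilinear sequence. By definition of the Jacobian, $\jac_{\x}(\B(\mathfrak{a}))[k,i] = \partial f_k/\partial x_i = L^k_i(\y)$, so every entry of $\jac_{\x}(\B(\mathfrak{a}))$ is a linear form in the $\y$-variables alone, and $f_k = \sum_i x_i\,\jac_{\x}(\B(\mathfrak{a}))[k,i]$. Substituting and interchanging the order of summation (legitimate since $g_k \in \FF[\y]$), I get
\[
\sum_{k=1}^m g_k f_k \;=\; \sum_{k=1}^m g_k \sum_{i=1}^{n_x} x_i L^k_i(\y) \;=\; \sum_{i=1}^{n_x} x_i \Bigl( \sum_{k=1}^m g_k L^k_i(\y) \Bigr) \;=\; \sum_{i=1}^{n_x} x_i\, h_i(\y),
\]
where $h_i(\y) := \sum_{k=1}^m g_k L^k_i(\y)$ is exactly the $i$-th coordinate of the row vector $\mathcal{G}^\top \jac_{\x}(\B(\mathfrak{a}))$, and, crucially, $h_i \in \FF(\mathfrak{a})[\y]$ contains no $\x$-variable.

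Finally I would conclude both implications at once: $\FF(\mathfrak{a})[\x,\y]$ is a polynomial ring over the field $\FF(\mathfrak{a})$, so its monomials are $\FF(\mathfrak{a})$-linearly independent, and for $i \neq i'$ the monomials appearing in $x_i h_i(\y)$ and $x_{i'} h_{i'}(\y)$ are disjoint (they differ already in the exponents of $x_i$ and $x_{i'}$). Hence $\sum_{i} x_i h_i(\y) = \textbf{0}$ if and only if every $h_i(\y) = 0$, i.e.\ if and only if $\mathcal{G}^\top \jac_{\x}(\B(\mathfrak{a})) = \textbf{0}$, which is precisely the statement that $\mathcal{G}^\top$ lies in the left kernel of $\jac_{\x}(\B(\mathfrak{a}))$. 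I do not expect a real obstacle; the one point to keep in mind is that the argument uses homogeneity of degree one in $\x$ (so Euler's identity holds and no constant terms leak into the Jacobian) together with the $\mathfrak{a}^k_{i,j}$ being algebraically independent transcendentals, which is exactly what makes the monomial-independence step valid.
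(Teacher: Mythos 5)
Your proof is correct, and it is essentially the standard argument: the paper itself states this proposition without proof, citing it to the work of Verbel et al.\ on superdetermined MinRank, where the same decomposition $f_k=\sum_i x_i\,\partial f_k/\partial x_i$ and grouping by the $x_i$ is used. One small remark: the algebraic independence of the parameters $\mathfrak{a}^k_{i,j}$ plays no role in your argument --- linear independence of monomials over the coefficient field holds in any polynomial ring --- so your proof in fact shows the equivalence for every homogeneous bilinear sequence over any field, genericity not required.
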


It is important to notice that, if a syzygy contains variables of only one set ($\textbf{x}$ or $\textbf{y}$), then that syzygy cannot be trivial. 


\begin{proposition}\label{prop:trivial_syzygy}
Let $\B(\mathfrak{a})$ be a generic homogeneous bilinear sequence in $\FF(\mathfrak{a})[\x,\y]^m$. If a sequence $\mathcal{G}\in \FF[\y]^m$ is a syzygy of $\B(\mathfrak{a})$, then $\mathcal{G}$ is nontrivial.
\end{proposition}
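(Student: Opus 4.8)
The plan is to argue by contradiction, and the entire argument will be a bidegree (bigrading) count. Here ``syzygy'' should be read as a \emph{nonzero} syzygy, since the zero sequence is vacuously trivial and there is nothing to prove in that case. So suppose $\mathcal{G}=(g_1,\dots,g_m)\in\FF[\y]^m$ were a trivial syzygy of $\B(\mathfrak{a})$. First I would unwind the definition of triviality: $\mathcal{G}$ lies in the module generated by the vectors $T_{i,j}=f_i\textbf{e}_j-f_j\textbf{e}_i$, so we may write $\mathcal{G}=\sum_{1\le i<j\le m}h_{i,j}\,T_{i,j}$ for some polynomial coefficients $h_{i,j}\in\FF(\mathfrak{a})[\x,\y]$. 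Reading off the $k$-th coordinate gives
\begin{equation*}
 g_k \;=\; \sum_{i<k}h_{i,k}\,f_i \;-\;\sum_{j>k}h_{k,j}\,f_j ,
\end{equation*}
so every component $g_k$ belongs to the ideal $\langle f_1,\dots,f_m\rangle\subseteq\FF(\mathfrak{a})[\x,\y]$. Since $\mathcal{G}\ne\textbf{0}$, I fix an index $k$ with $g_k\ne0$.

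The key step is to grade $\FF(\mathfrak{a})[\x,\y]$ by the pair $(\x\text{-degree},\,\y\text{-degree})$ and to use that each $f_i$ of a generic \emph{homogeneous} bilinear sequence is bihomogeneous of bidegree $(1,1)$. Expanding each $h_{i,j}$ into its bihomogeneous parts and multiplying by $f_i$, every monomial occurring in $h_{i,j}f_i$ has $\x$-degree at least $1$; hence every monomial of $g_k$ has $\x$-degree at least $1$. But $g_k\in\FF[\y]$ is a polynomial in the $\y$-variables alone, so every one of its monomials has $\x$-degree $0$. This forces $g_k=0$, contradicting the choice of $k$; therefore $\mathcal{G}$ cannot be trivial. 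One can phrase the same point compactly: the ideal generated by bidegree-$(1,1)$ forms meets $\FF[\y]$ (even $\FF(\mathfrak{a})[\y]$) only in $\{0\}$, while every coordinate of a trivial syzygy lies in that ideal.

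I do not expect any real obstacle here; the only points that need a little care are that the coefficients $h_{i,j}$ may themselves involve the $\x$-variables and the parameters $\mathfrak{a}$ (harmless, since involving $\x$ can only raise the $\x$-degree and the bigrading does not see the coefficient field), and that $\mathcal{G}$ need not be homogeneous (also harmless, since the $f_i$ are bihomogeneous, so the $\x$-degree-$0$ component of each $g_k$ can be isolated monomial by monomial). This argument is exactly the precise form of the informal remark preceding the statement: a syzygy supported on a single group of variables carries no $\x$-degree, whereas every nonzero entry of a trivial syzygy of a homogeneous bilinear sequence is forced to have positive $\x$-degree.
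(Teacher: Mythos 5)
Your proof is correct. The paper does not print its own argument for this proposition (it is quoted from prior work, accompanied only by the informal remark that a syzygy involving just one set of variables cannot be trivial), and your bidegree count --- every coordinate of a trivial syzygy lies in the ideal generated by bidegree-$(1,1)$ forms and hence has all monomials of positive $\x$-degree, which is incompatible with a nonzero element of $\FF[\y]$ --- is exactly the standard formalization of that remark, so your approach coincides with the intended one.
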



\section{Algebraic Structure}
Here we analyse the algebraic structure of the $\FF[\y]$-module generated by a bilinear sequence $B \in \B(n_x,n_y,m)$ over a finite field $\FF$. This  $\FF[\y]$-module is denoted by $I_{\y}(B)$, and is defined as the set of the linear combinations of polynomials in $B$ and coefficients in $\FF[\y]$.
Our approach is based on the particular structure of bilinear sequences, so we analyse the module of syzygies of a generic homogeneous sequence.
Notice that the field equations related to $\FF[\x,\y]$ do not have this structure. Thus, the natural procedure of concatenating the field equations to the original sequence and then apply the well-known theory of sequences over algebraic closed field --as in \citep{booleansolve}-- is not considered in this case.

\subsection{Jacobian Syzygies of Generic Bilinear Sequences}\label{sec:jac_syzygies}


\cite{bihonogeneous2011} combined Proposition \ref{prop:syzygies_jac} and Cramer's rule to find nontrivial syzygies for a generic homogeneous sequence. When $m> n_x$ this method provides  $\binom{m}{n_{x}+1}$ syzygies of degree $n_x$ in $\FF[\y]^m$. In the under-determined case, $m < n_x + n_y$, it was conjectured by \cite{bihonogeneous2011} that, when applied to sequences in $\FF[\x,\y]^{m}$, those syzygies form a basis for the left kernel of $\jac _{\x}(B)$, for each sequence $B$ in an open Zariski set $O \subset \B^{(h)}(n_x,n_y,m)$. This conjecture does not hold for the determined ($m = n_x + n_y$)  and over-determined ($m> n_x + n_y$) cases.

In the determined and over-determined cases, the degree $n_x$ syzygies described in \citep{bihonogeneous2011} do exist, but they do not form a basis for the left kernel of $\jac_{\x}(B)$. In those cases, the left-kernel of the $\jac_{\x}(B)$ is expected to have elements of degree less than $n_x$.


Let $B(\mathfrak{a})$ be a generic homogeneous sequence in  $\FF(\mathfrak{a})[\x,\y]^{m}$. Then, the Jacobian $\jac_{\x}(B(\mathfrak{a}))$ is a matrix of size $m \times n_{x}$, where each entry is a generic homogeneous linear form in $\FF(\mathfrak{a})[\y]$. 
Let $A$ be the set resulting of multiplying each row of $\jac_{\x}(B(\mathfrak{a}))$ by each degree $d-2$ monomial in $\FF[\y]$. So, $A$ is a set of
$$m\binom{n_y + d-3}{d-2}$$
elements living in the $\FF(\mathfrak{a})$-vector space of sequences of size $n_x$ containing degree $d-1$ polynomials in $\FF(\mathfrak{a})[\y]$. The coefficients of a vanishing $\FF(\mathfrak{a})$-linear combination of the elements of $A$ can be used to build a degree $d-2$ element in the left-kernel of $\jac_{\x}(B(\mathfrak{a}))$, and consequently, a degree $d-2$ nontrivial syzygy of $B(\mathfrak{a})$ via propositions \ref{prop:syzygies_jac} and  \ref{prop:trivial_syzygy}.

Since $\FF(\mathfrak{a})$ is a fraction field, every nonzero algebraic expression in $\FF(\mathfrak{a})$ has an inverse. 
 There is a vanishing $\FF(\mathfrak{a})$-liner combination of the elements of $A$ whenever
\begin{equation}\label{eq:row_col}
    m\binom{n_y + d-3}{d-2} > n_x\binom{n_y + d-2}{d-1}.
\end{equation}
Notice, Inequality \eqref{eq:row_col} holds if and only if
$$d > \frac{n_x(n_y-1)}{m-n_x} +1.$$
Therefore, the first fall degree of a generic bilinear sequence $\tilde{B}(\mathfrak{a}) \in \FF(\mathfrak{a})[\x,\y]^{m}$ is upper bounded by
$$
\operatorname{min} \left\lbrace d \in \mathbb{Z} ^{+} \; \vert  \;  d > \frac{n_x(n_y-1)}{m-n_x} +1\right\rbrace.
$$

In addition, it is easy to see that 
\begin{equation}\label{eqn:det_overdet}
    n_{x} + n_{y} \leq m \Longleftrightarrow\frac{n_{x}(n_y -1)}{m-n_x } +1 < n_{x}+1. 
\end{equation}
 Indeed,
\begin{align*}
    n_{x} + n_{y} &\leq m \\
    n_{x} + n_{y} -1 &< m \\
    n_y - 1 &< m-n_x \\
    \frac{n_y -1}{m-n_x } &< 1\\
    \frac{n_{x}(n_y -1)}{m-n_x } +1 &< n_{x}+1.
\end{align*}
%
%
Consequently, in the determined and over-determined cases, there exists an integer $d \leq n_x+1$ satisfying inequality \eqref{eq:row_col}, hence there are nontrivial  syzygies of degree strictly less than $n_x$ in $\FF(\mathfrak{a})[\y]^{m}$.

\subsection{$\y$-Degree of regularity}\label{sec:deg_reg}

The concepts of  $\y$-degree of regularity and $\y$-semiregularity for a homogeneous bilinear sequence are introduced in this section. Based on theoretical arguments and a wide experimental verification, we conjecture that most of the sequences in a space $\B^{(h)}(n_x,n_y,m)$ are $\y$-semiregular. We deduce the $\y$-degree of regularity for $\y$-semiregular sequences, see Proposition \ref{prop:deg_reg}.   
\begin{definition}
The $\y$-Macaulay matrix of degree $d$ of a sequence of bilinear polynomials $B \in \B(n_{x},n_{y},m)$ is defined as the matrix $\textbf{M}_{\y,\leq d}(B)$ containing the coefficients of the polynomials of the form $\mathfrak{m} f$, where $f \in B$ and $\mathfrak{m} \in \FF[\y]$ is a monomial of degree at most $d-2$. The columns of the matrix correspond to the monomials produced in all the products $\mathfrak{m} f$, and sorted in decreasing order with respect to the grevlex ordering. For $2 \leq j\leq d$, we use $\textbf{M}_{\y, j}(B)$  to denote the row submatrix of $\textbf{M}_{\y,\leq d}(B)$ formed by taking only monomials $\mathfrak{m} \in \FF[\y]$ with degree exactly $j-2$. We refer to $\textbf{M}_{\y,j}(B)$ as the degree $j$ part of the  $\y$-Macaulay matrix $\textbf{M}_{\y, \leq d}(B)$.
\end{definition}

\begin{example} Consider the sequence $ B= (x_{1}y_{1} +  x_{1}y_{2} + x_{2}y_{1},\; x_{2}y_{1} + x_{2}y_{2}) \in \B(2,2,2)$. Then the degree $3$ $\y$-Macaulay matrix $\textbf{M}_{\y,\leq 3}(B) $ is given by
\begin{displaymath}
\begin{array}{c}
y_1 f_2  \\
y_2 f_2  \\
y_1 f_1  \\
y_2 f_1  \\
f_2 \\
f_1
\end{array}
\underbrace{\begin{pmatrix}
0 & 0 & 0 & 1 & 1 & 0 &     0 & 0 & 0 & 0 \\
0 & 0 & 0 & 0 & 1 & 1 &     0 & 0 & 0 & 0 \\
1 & 1 & 0 & 0 & 0 & 1 &     0 & 0 & 0 & 0 \\
0 & 1 & 1 & 0 & 1 & 0 &     0 & 0 & 0 & 0 \\
0 & 0 & 0 & 0 & 0 & 0 &     0 & 0 & 1 & 1 \\
0 & 0 & 0 & 0 & 0 & 0 &     1 & 1 & 1 & 0
\end{pmatrix}}_{\textbf{M}_{\y,\leq 3}(B) }
\begin{pmatrix}
x_{1}y_{1}^{2}\\
x_{1}y_{1}y_{2}\\
x_{1}y_{2}^{2}\\
x_{2}y_{1}^{2}\\
x_{2}y_{1}y_{2}\\
x_{2}y_{2}^{2}\\
x_{1}y_{1} \\
x_{1}y_{2} \\
x_{2}y_{1} \\
x_{2}y_{2} 
\end{pmatrix},
\end{displaymath}
where $f_1 = x_{1}y_{1} +  x_{1}y_{2} + x_{2}y_{1}$ and $f_2 = x_{2}y_{1} + x_{2}y_{2}$. Notice that $\textbf{M}_{\y,3}(B)$ is the submatrix consisting of the first four rows of $\textbf{M}_{\y,\leq 3}(B)$, which were constructed multiplying $f_1$ and $f_2$ by monomials of degree $1=3-2$.
\end{example}

Since every row of the $\y$-Macaulay matrix $\textbf{M}_{\y,\leq d}(B)$ represents one polynomial, we can define the $\FF$-vector space $J_{\y,j}(B)$ generated by the polynomials represented by the rows of $\textbf{M}_{\y,j}(B)$, for any $2\leq j \leq d$. This is introduced in the following definition.

\begin{definition}
Given a bilinear sequence $B = (f_{1},\ldots, f_{m}) \in \B(n_{x},n_{y},m)$,  we use the symbols $I_{\y,\leq d}(B) $ and $J_{\y,\leq d}(B)$ to denote the following $\FF$-vector spaces 
\begin{align*}
        I_{\y,\leq d}(B)  &= \left\lbrace h \; \vert \; h = \sum_{i=1}^{m} g_{i}f_{i}, g_{i} \in \FF[\y]  \text{ and } \deg\left(h\right) \leq d \right\rbrace  \\
       J_{\y,\leq d}(B) &= \left\lbrace h \; \vert \; h = \sum_{i=1}^{m} g_{i}f_{i}, g_{i} \in \FF[\y]  \text{ and } \deg\left(g_i\right) \leq d-2\right\rbrace .
\end{align*}
We use $I_{\y,j}(B)$ (\textit{resp.} $J_{\y,j}(B)$) to denote the elements in $I_{\y,\leq d}(B)$ (\textit{resp.} $J_{\y,\leq d}(B)$) of degree $j$ (\textit{resp.} for $g_i$'s having exactly degree $j-2$). We use $I_{\y}(B)$ to denote $\cup _{j=1}^{\infty } I_{\y,j}$.  
\end{definition}


\begin{remark}\label{rem:IvsJ}
In general, for any bilinear sequence $B$ we have $J_{\y,\leq d}(B) \subseteq I_{\y,\leq d}(B)$, but equality does not always hold.
\end{remark}

A homogeneous sequence is called $\y$-$d$-regular if it has no syzygyes over $\FF[\y]$ of degree at most $d$. More precisely, 

\begin{definition}
A homogeneous bilinear sequence $B \in \B^{(h)}(n_{x},n_{y},m)$ is said to be $\y$-$d$-regular if for each $j = 2,\ldots, d$,  
\begin{equation*}
    \rank\left( \textbf{M}_{\y,j}(B)\right) = m\binom{n_{y}+j -3}{j-2}.
\end{equation*}
\end{definition}


The $\y$-degree of regularity of a homogeneous bilinear sequence $B \in \B^{(h)}(n_{x},n_{y},m)$ is the minimum integer $d$ such that every degree $d$ monomial in $\FF[\x,\y]$, which is linear in the $\x$ variables, belongs to $J_{\y,d}(B)$. 

\begin{definition}\label{def:deg_reg}
The $\y$-degree of regularity $d_{\y,reg}(B)$ of a homogeneous bilinear sequence $B \in \B^{(h)}(n_{x},n_{y},m)$ is defined to be the minimum integer $d$ satisfying that 
$$\rank\left(\textbf{M}_{\y,d}(B)\right) = 
n_x\binom{n_{y} + d -2}{d-1}.$$
Alternatively, it can be defined as 
\begin{displaymath}
    d_{\y,reg}(B) = \min \left\lbrace d \; \vert \; \dim \left( J_{\y,d}(B)\right)  = \dim \left(\FF[\x,\y]_{1,d-1}\right) \right\rbrace .
\end{displaymath}
\end{definition}

\begin{definition}
A homogeneous bilinear sequence $B\in \B^{(h)}(n_{x},n_{y},m)$ is $\y$-semiregular if it is $\y$-$d$-regular for every $d$ less than $d_{\y,reg}(B)$. A sequence $B \in \B(n_x,n_y,m)$ is said $\y$-semiregular if its quadratic homogeneous part is $\y$-semiregular. 
\end{definition}

In this section, we show that the syzygies in $\FF[\y]^{m}$ of a $\y$-semiregular bilinear sequence are only the ones we know there exist for a generic bilinear sequence $\FF(\mathfrak{a})[\x,\y]^{m}$, see Section \ref{sec:jac_syzygies}. Hence, $\y$-semiregular sequences can be thought as generic sequences in $\B(n_x,n_y,m)$. As a consequence, if a sequence $B \in \B(n_x,n_y,m)$ is $\y$-semiregular, then its $\y$-degree of regularity must be less than or equal to $n_x+2$.


\begin{proposition}\label{prop:deg_reg}
Let $n_{x},n_y,m$ be positive integers with $n_{x} + n_{y}\leq m$. If $B \in \B^{(h)}(n_{x},n_{y},m)$ is $\y$-semiregular, then
\begin{equation*}
    d_{\y,reg}(B) = \left\lceil \frac{n_x(n_y-1)}{m-n_x}\right\rceil +1.
\end{equation*}
\end{proposition}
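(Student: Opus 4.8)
The plan is to derive the value of $d_{\y,reg}(B)$ directly from its definition together with the definition of $\y$-semiregularity, using only the counting equivalence already recorded in Section~\ref{sec:jac_syzygies}; no further structural facts about the Jacobian are needed. Throughout, put $r:=\frac{n_x(n_y-1)}{m-n_x}$ (well defined since $m\geq n_x+n_y\geq n_x+1>n_x$), so the claim is $d_{\y,reg}(B)=\lceil r\rceil+1$. Recall that $\textbf{M}_{\y,d}(B)$ has $m\binom{n_y+d-3}{d-2}$ rows and $n_x\binom{n_y+d-2}{d-1}$ columns, and that, exactly as in the derivation of Inequality~\eqref{eq:row_col},
$$m\binom{n_y+d-3}{d-2}\;>\;n_x\binom{n_y+d-2}{d-1}\quad\Longleftrightarrow\quad d>r+1,$$
the same reversible computation also yielding the corresponding ``$\leq$'', ``$=$'' and ``$<$'' versions.

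For the lower bound $d_{\y,reg}(B)\geq\lceil r\rceil+1$ I would argue, without using semiregularity at all: if $d$ is an integer with $2\leq d\leq\lceil r\rceil$, then $d\leq\lceil r\rceil<r+1$, so $\textbf{M}_{\y,d}(B)$ has strictly fewer rows than columns, whence $\rank(\textbf{M}_{\y,d}(B))\leq m\binom{n_y+d-3}{d-2}<n_x\binom{n_y+d-2}{d-1}$; thus no such $d$ meets the defining rank condition of $d_{\y,reg}(B)$, and therefore $d_{\y,reg}(B)\geq\lceil r\rceil+1$.

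For the upper bound I would use that, by definition, a $\y$-semiregular sequence $B$ is $\y$-$(d_{\y,reg}(B)-1)$-regular, i.e.\ $\textbf{M}_{\y,j}(B)$ has full row rank for all $2\leq j\leq d_{\y,reg}(B)-1$; note also that $d_{\y,reg}(B)$ is then finite, since for $d>r+1$ the matrix $\textbf{M}_{\y,d}(B)$ has more rows than columns and cannot have full row rank, so an infinite $d_{\y,reg}(B)$ would contradict $\y$-semiregularity. Set $D:=\lceil r\rceil+1$ and suppose for contradiction $d_{\y,reg}(B)>D$. Then $2\leq D\leq d_{\y,reg}(B)-1$, so $\textbf{M}_{\y,D}(B)$ has full row rank; in particular it has at most as many rows as columns, which by the equivalence above forces $D\leq r+1$. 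Since $D=\lceil r\rceil+1$, this gives $\lceil r\rceil\leq r$, hence $r\in\mathbb{Z}$ and $D=r+1$, so $\textbf{M}_{\y,D}(B)$ is a \emph{square} matrix of full row rank and therefore of full column rank: $\rank(\textbf{M}_{\y,D}(B))=n_x\binom{n_y+D-2}{D-1}$. Thus $D$ satisfies the defining condition of $d_{\y,reg}(B)$, so $d_{\y,reg}(B)\leq D$ by minimality, contradicting $d_{\y,reg}(B)>D$. Hence $d_{\y,reg}(B)\leq\lceil r\rceil+1$, and combining with the lower bound gives the desired equality.

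The substantive point — and the place where semiregularity is really used — is the passage from ``full row rank below $d_{\y,reg}(B)$'' to ``full column rank at $\lceil r\rceil+1$''. When $r\notin\mathbb{Z}$ this is painless: $\textbf{M}_{\y,\lceil r\rceil+1}(B)$ already has strictly more rows than columns, so the assumption that it still has full row rank is self-contradictory. When $r\in\mathbb{Z}$ the critical Macaulay matrix is exactly square, and one must invoke the elementary fact that a square matrix of full row rank is invertible (equivalently, that $\y$-semiregularity forbids a syzygy of degree $D-2$) to conclude full column rank. (If the statement is to be fully correct one should also assume $n_y\geq 2$, i.e.\ $r>0$ so that $D\geq 2$; for $n_y=1$ the formula degenerates and the Macaulay matrices are only defined for $d\geq 2$.)
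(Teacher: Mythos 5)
Your proof is correct and follows essentially the same route as the paper's: the lower bound comes from the row count of $\textbf{M}_{\y,d}(B)$ being smaller than its column count for $d<r+1$, and the upper bound from $\y$-semiregularity forcing full row rank below $d_{\y,reg}(B)$, which is incompatible with the row/column comparison at $\lceil r\rceil+1$ (your explicit treatment of the square case $r\in\mathbb{Z}$ is a slightly more careful rendering of the same contradiction the paper derives for any integer $d$ with $r+1\leq d<d_{\y,reg}(B)$).
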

\begin{proof}
Let $n_x, n_y$ and $m$ be positive integers with $n_x + n_y \leq m$. Suppose $B \in \B^{(h)}(n_x,n_y,m)$ is a $\y$-semiregular sequence and, for simplicity, let us set $\tilde{d} = d _{\y,reg}(B)$. If
$$ \tilde{d}<  \frac{n_x(n_y-1)}{m-n_x} +1$$
then, as mentioned at the end of Section \ref{sec:jac_syzygies}, we had
\begin{equation*}
m \binom{n_{y} + \tilde{d} -3}{\tilde{d}-2} < n_{x}\binom{n_{y} + \tilde{d} -2}{\tilde{d}-1} = \rank \left(\textbf{M}_{\y, \tilde{d}}(B)\right) .
\end{equation*}
Consequently, the number of rows of $\textbf{M}_{\y, \tilde{d}}(B)$ would be less than its rank, which is a contradiction. Thus $n_x(n_y-1)/(m-n_x) +1 \leq \tilde{d}$.

Now, assume there is an integer $d$  satisfying 
$$\frac{n_x(n_y-1)}{m-n_x} +1 \leq d < \tilde{d}.$$ Hence
$$ \rank \left( \textbf{M}_{\y, d}(B)\right) < n_{x}\binom{n_{y} + d -2}{d-1} \leq 
m \binom{n_{y} + d -3}{d-2},$$
where the strict inequality is provided by the definition of the $\y$-degree of regularity (see Definition \ref{def:deg_reg}).
Thus, $B$ is not $\y$-$d$-regular, for $d < \tilde{d}$. That contradicts the fact that $B$ is $\y$-semiregular. Therefore, we must have
\begin{equation*}
    d_{\y,reg}(B) = \left\lceil \frac{n_x(n_y-1)}{m-n_x}\right\rceil +1.
\end{equation*}
\end{proof}

\begin{remark}
Notice that Equation \eqref{eqn:det_overdet} and Proposition \ref{prop:deg_reg} imply that the $\y$-degree of regularity of a $\y$-semiregular sequence $B$ is less than or equal to $n_{x}+1$. \end{remark}

Based on extensive experimental results, and following an analogous approach to the one used in \citep{bihonogeneous2011}, we now conjecture that being $\y$-semiregular is a generic property in the set of over-determined homogeneous bilinear sequences.

\begin{conjecture}\label{conj:generic_semiregular}
Suppose $n_{x},n_{y},m$ are positive integers with $n_{x} + n_{y} \leq m$. There exists an open Zariski set $ O \subset \B^{(h)}(n_{x},n_{y},m)$ such that every homogeneous sequence $B \in O$ is $\y$-semiregular.
\end{conjecture}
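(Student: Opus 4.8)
The plan is to mirror the strategy used by \cite{bihonogeneous2011} for the under-determined case, but now working with the $\y$-Macaulay matrices $\textbf{M}_{\y,j}(B)$ instead of the classical Macaulay matrices, and with the bound on $d_{\y,reg}$ coming from Proposition~\ref{prop:deg_reg} and Equation~\eqref{eqn:det_overdet}. The key point is that $\y$-semiregularity is a \emph{finite} list of rank conditions: a homogeneous bilinear sequence $B \in \B^{(h)}(n_x,n_y,m)$ is $\y$-semiregular precisely when, for every $j$ in the range $2 \le j \le n_x+1$ (using the remark that $d_{\y,reg} \le n_x+1$ for semiregular sequences), the matrix $\textbf{M}_{\y,j}(B)$ has the maximal rank compatible with being $\y$-$d$-regular up to that degree, namely $\min\{m\binom{n_y+j-3}{j-2},\, n_x\binom{n_y+j-2}{j-1}\}$. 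Each such maximal-rank condition is the non-vanishing of at least one maximal minor, i.e., a \emph{Zariski-open} condition on the entries of $B$, since the entries of $\textbf{M}_{\y,j}(B)$ are (fixed) polynomial functions of the coefficients $\mathfrak{a}$ of $B$.

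First I would make precise the identification $\B^{(h)}(n_x,n_y,m) \cong \FF^{n_x n_y m}$ via the coefficient tuples $\mathfrak{a}_{i,j}^k$, so that the Zariski topology of the statement is the usual one on affine space. Next, for each $j$, I would write down the map sending $\mathfrak{a}$ to $\textbf{M}_{\y,j}(B(\mathfrak{a}))$ and observe that its entries are $\FF$-linear (in fact $0/1$-combinations via the shift structure) in the $\mathfrak{a}_{i,j}^k$, so the $r\times r$ minors are polynomials in $\FF[\mathfrak{a}]$; let $O_j$ be the locus where \emph{some} minor of the target rank $r_j := \min\{m\binom{n_y+j-3}{j-2}, n_x\binom{n_y+j-2}{j-1}\}$ is nonzero. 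Each $O_j$ is Zariski-open. Then $O := \bigcap_{j=2}^{n_x+1} O_j$ is Zariski-open, and I would argue that any $B \in O$ is $\y$-semiregular: for degrees $j$ below $d_{\y,reg}(B)$ we have $m\binom{n_y+j-3}{j-2} \le n_x\binom{n_y+j-2}{j-1}$ (this is the content of the first paragraph of the proof of Proposition~\ref{prop:deg_reg}, combined with \eqref{eqn:det_overdet}), so $r_j = m\binom{n_y+j-3}{j-2}$ is exactly the full-row-rank condition defining $\y$-$d$-regularity, while at degree $d_{\y,reg}(B)$ itself the condition $r_j = n_x\binom{n_y+j-2}{j-1}$ is exactly Definition~\ref{def:deg_reg}; conversely, reaching that rank forces $d_{\y,reg}(B) \le \lceil n_x(n_y-1)/(m-n_x)\rceil+1 \le n_x+1$, so no degrees beyond $n_x+1$ need to be checked.

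The one thing that must be established to make $O$ genuinely useful (and not vacuously empty) is that \emph{each} $O_j$ is nonempty, equivalently that the generic bilinear sequence $B(\mathfrak{a})$ over the fraction field $\FF(\mathfrak{a})$ attains the rank $r_j$ — for then some minor is a nonzero element of $\FF[\mathfrak{a}]$, hence is not identically zero, hence $O_j \ne \varnothing$. This is exactly where the analysis of Section~\ref{sec:jac_syzygies} is invoked: that section shows, via Propositions~\ref{prop:syzygies_jac} and~\ref{prop:trivial_syzygy} and a dimension count over the fraction field, that $\textbf{M}_{\y,j}(B(\mathfrak{a}))$ has exactly the rank $r_j$ (full row rank as long as \eqref{eq:row_col} fails, and rank $n_x\binom{n_y+j-2}{j-1}$ once it holds). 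I would need to check that these fraction-field rank statements, which the excerpt derives for the \emph{Jacobian} $\jac_{\x}(B(\mathfrak{a}))$ and its multiples, transfer cleanly to the $\y$-Macaulay matrix $\textbf{M}_{\y,j}(B)$ — this should follow because Proposition~\ref{prop:syzygies_jac} gives an $\FF[\y]$-module isomorphism between $\y$-syzygies of $B$ and left-kernel elements of $\jac_{\x}(B)$, degree by degree, so the corank of $\textbf{M}_{\y,j}(B)$ equals the dimension of the degree-$(j-2)$ part of that left kernel.

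The main obstacle I anticipate is precisely this last transfer step together with the sign convention of the inequality: one must be careful that, below the $\y$-degree of regularity, the generic $\y$-Macaulay matrix really is of full row rank (no unexpected non-Jacobian $\y$-syzygies), and this is where the $\y$-semiregularity conjecture has genuine content — the argument can only \emph{reduce} the conjecture to the claim that the generic specialization is non-degenerate, which, being a statement about a single matrix over $\FF(\mathfrak{a})$, is either provable by an explicit evaluation of a well-chosen minor or must itself be left as the conjectural input. I would therefore expect the honest outcome to be: the topological part (intersection of opens, checking $B\in O$ is $\y$-semiregular) is routine and rigorous, while the nonemptiness of $O$ is equivalent to the generic-rank statement that the conjecture is really asserting, so the ``proof'' is at best a clean reformulation plus the reduction, not a full resolution — which is consistent with the authors presenting it as a conjecture rather than a theorem.
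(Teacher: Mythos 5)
The statement you set out to prove is not proved in the paper at all: it is stated as Conjecture~\ref{conj:generic_semiregular} and supported only by experiments (Algorithm~\ref{alg:semiregulairy_test}, Table~\ref{tab:conjecture}), so there is no paper proof to compare yours against. That said, your reduction is the right reformulation and matches the authors' implicit reasoning. Identifying $\B^{(h)}(n_x,n_y,m)$ with $\FF^{n_x n_y m}$, the entries of $\textbf{M}_{\y,j}(B(\mathfrak{a}))$ are linear in the coefficients $\mathfrak{a}$, so each maximal-rank condition is the non-vanishing of some minor, hence Zariski-open; since the ratio $\binom{n_y+j-2}{j-1}/\binom{n_y+j-3}{j-2}=(n_y+j-2)/(j-1)$ is decreasing in $j$, the row/column crossover is clean and only degrees $2\le j\le n_x+1$ need to be imposed (Proposition~\ref{prop:deg_reg} and Equation~\eqref{eqn:det_overdet}); and your transfer step is sound, because for homogeneous $B$ Proposition~\ref{prop:syzygies_jac} identifies the left kernel of $\textbf{M}_{\y,j}(B)$ with the degree-$(j-2)$ part of the left kernel of $\jac_{\x}(B)$, so coranks match degree by degree.

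The gap you flag at the end is, however, exactly the content of the conjecture, and neither your argument nor Section~\ref{sec:jac_syzygies} closes it. That section only proves rank \emph{deficiency} of the generic matrix once the row count exceeds the column count (existence of syzygies via a dimension count over $\FF(\mathfrak{a})$); it never shows that below the crossover degree the generic $\y$-Macaulay matrix has full row rank, i.e.\ that there are no unexpected low-degree syzygies in $\FF(\mathfrak{a})[\y]^m$. Without that, each $O_j$ could a priori be empty and the construction says nothing. Two further cautions if you try to push this through: (i) with the paper's definition of the Zariski topology on $\FF^k$ for finite $\FF$, every subset is algebraic, hence every subset is open, so the literal statement is vacuous unless one interprets ``open'' as the non-vanishing locus of specific polynomials in $\FF[\mathfrak{a}]$ (the intended reading, as in the cited work of Faug\`ere et al.); and (ii) even a nonzero minor in $\FF[\mathfrak{a}]$ may vanish at every $\FF$-rational point when $q$ is small, so generic full rank over $\FF(\mathfrak{a})$ does not by itself give a semiregular sequence over $\FF$ --- which is consistent with the experimentally observed percentages dropping below $100\%$ precisely when the degree-$\tilde d$ block is square. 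So your proposal is a correct and useful reduction of the conjecture to the generic-rank claim, but, as you yourself conclude, it is not a resolution, and the paper offers none either.
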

See more details in Section \ref{sec:experimental} and Table \ref{tab:conjecture}.
\medskip


\subsection{First Fall Degree}
\label{sec:firstfall}

In this section we introduce the $\y$-first fall degree for bilinear sequences and we estimate it for $\y$-semiregular sequences.

\begin{definition}[$\y$-First Fall Degree]
Let  $B$ be a bilinear sequence and $\tilde{B} = (\tilde{f}_{1},\ldots, \tilde{f}_{m})$ be the homogeneous sequence formed by the quadratic part of every polynomial in $B$. We say $B$ has a $\y$-degree fall at degree $d$, if there is a sequence of degree $d-2$ homogeneous polynomials $G \in \FF[\y]^{m}$ that is a non-trivial syzygy of $\tilde{B}$. 
The $\y$-first fall degree of $B$, denoted $d_{\y-ff}(B)$, is the smallest $d$ such that $B$ has a $\y$-degree fall at degree $d$. 
\end{definition}

In general, a degree fall for a sequence of polynomials $B$ over $\FF[\x,\y]$ is obtained from a non-trivial syzygy over $\FF[\x,\y]$  of the sequence $\tilde{B}$, which is the one formed by the homogeneous part of highest degree of each polynomial in $B$. For semiregular sequences we can precisely predict at what degree non-trivial syzygies first appear \citep{Bardet_asymptoticbehaviour,solvdeg_degreg}.
An analogous prediction can be done if we only consider coefficients in $\FF[\y]$ instead of the whole polynomial ring $\FF[\x,\y]$.

\begin{proposition}\label{prop:first_fall}
Let $n_{x},n_y,m$ be positive integers with $n_{x} + n_{y}\leq m$. Let $B \in \B(n_x,n_y,m)$ be a sequence such that its quadratic homogeneous part $\tilde{B} \in \B^{(h)}(n_{x},n_{y},m)$ is $\y$-semiregular. Thus,
\begin{equation*}  
    d_{\y-ff}(B) = \min \left\lbrace d \in \mathbb{Z} ^{+} \; \vert  \;  d > \frac{n_x(n_y-1)}{m-n_x} +1\right\rbrace .
\end{equation*}
\end{proposition}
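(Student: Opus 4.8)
The plan is to show that the quantity
\[
d^{*} \;=\; \min\left\lbrace d \in \mathbb{Z}^{+} \;\middle\vert\; d > \frac{n_x(n_y-1)}{m-n_x} + 1 \right\rbrace
\]
is exactly the smallest degree at which a non-trivial syzygy in $\FF[\y]^{m}$ of the homogeneous part $\tilde{B}$ appears. I would split this into two inequalities: $d_{\y-ff}(B) \leq d^{*}$ (existence of a degree fall at $d^{*}$) and $d_{\y-ff}(B) \geq d^{*}$ (no earlier degree fall), exploiting the $\y$-semiregularity hypothesis for the second, harder direction.

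For the upper bound $d_{\y-ff}(B) \leq d^{*}$: by definition of $d^{*}$, the inequality $d^{*} > n_x(n_y-1)/(m-n_x) + 1$ holds, which (as shown at the end of Section \ref{sec:jac_syzygies}) is equivalent to the counting inequality \eqref{eq:row_col} with $d = d^{*}$, namely $m\binom{n_y + d^{*}-3}{d^{*}-2} > n_x\binom{n_y + d^{*}-2}{d^{*}-1}$. As explained there, since we are working over the fraction field $\FF(\mathfrak{a})$, this strict inequality on the number of rows versus columns of the matrix $A$ forces a vanishing $\FF(\mathfrak{a})$-linear combination of the rows, and via Propositions \ref{prop:syzygies_jac} and \ref{prop:trivial_syzygy} this yields a non-trivial syzygy of degree $d^{*}-2$ in $\FF[\y]^{m}$ for the generic sequence. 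Since the quadratic part $\tilde{B}$ is assumed $\y$-semiregular (hence behaves generically), this argument transfers to $\tilde{B}$, and therefore $B$ has a $\y$-degree fall at degree $d^{*}$.

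For the lower bound $d_{\y-ff}(B) \geq d^{*}$: suppose toward a contradiction that there is a $\y$-degree fall at some degree $d < d^{*}$, i.e.\ a non-trivial homogeneous syzygy $G \in \FF[\y]^{m}$ of degree $d-2$ for $\tilde{B}$. The existence of such a $G$ means the rows of $\textbf{M}_{\y,d}(\tilde{B})$ are linearly dependent, so $\rank(\textbf{M}_{\y,d}(\tilde{B})) < m\binom{n_y + d-3}{d-2}$. But $d < d^{*}$ means $d \leq n_x(n_y-1)/(m-n_x) + 1$, which by the equivalence established in Section \ref{sec:jac_syzygies} gives $m\binom{n_y+d-3}{d-2} \leq n_x\binom{n_y+d-2}{d-1}$; combined with $d < d^{*} \le d_{\y,reg}(\tilde{B})$ (using Proposition \ref{prop:deg_reg}, noting $d^* = \lceil n_x(n_y-1)/(m-n_x)\rceil + 1$ when this ceiling value works, and handling the boundary), the $\y$-$d$-regularity demanded by semiregularity requires $\rank(\textbf{M}_{\y,d}(\tilde{B})) = m\binom{n_y+d-3}{d-2}$, contradicting the strict inequality just derived. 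Hence no $\y$-degree fall occurs below $d^{*}$, giving $d_{\y-ff}(B) = d^{*}$.

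The main obstacle I anticipate is the careful bookkeeping at the boundary between $d^{*}$ and $d_{\y,reg}$: one must check that for every integer $d$ strictly below $d^{*}$ the sequence is required to be $\y$-$d$-regular (so that the rank equality can be invoked), which means verifying $d^{*} \leq d_{\y,reg}(\tilde{B})$, or equivalently that $d^{*}$ and $\lceil n_x(n_y-1)/(m-n_x)\rceil + 1$ agree, or that the semiregularity definition still applies in the edge case where they differ by one. There is a genuine subtlety when $n_x(n_y-1)/(m-n_x)$ is an integer versus not, since then $d^{*}$ (defined by a strict inequality) and $d_{\y,reg}$ (defined by a ceiling) can diverge by one; one must confirm that the counting inequality \eqref{eq:row_col} is still \emph{not} satisfied at $d = d^{*}-1$ so that no degree fall is possible there, which is precisely guaranteed by the definition of $d^{*}$ as the minimum such integer. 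Aside from this, the argument is essentially the same double-counting used in the proof of Proposition \ref{prop:deg_reg}, read at the level of ranks of $\y$-Macaulay matrices.
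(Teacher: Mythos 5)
Your overall strategy (bound $d_{\y-ff}$ above by the row/column count at $d^{*}$ and below by $\y$-semiregularity) is the same as the paper's, and the upper bound is fine --- in fact you do not need any genericity transfer there: since $\mathbf{M}_{\y,d^{*}}(\tilde{B})$ has more rows than columns over $\FF$, its rows are dependent for \emph{every} sequence, and non-triviality of the resulting syzygy in $\FF[\y]^{m}$ comes from Proposition~\ref{prop:trivial_syzygy}, exactly as in the paper.

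The gap is in your treatment of the boundary case for the lower bound. Your contradiction argument rests on the claim $d^{*}\le d_{\y,reg}(\tilde{B})$, but when $m-n_x$ divides $n_x(n_y-1)$ Proposition~\ref{prop:deg_reg} gives $d_{\y,reg}(\tilde{B}) = n_x(n_y-1)/(m-n_x)+1 = d^{*}-1$, so the degree $d=d^{*}-1$ is \emph{not} below $d_{\y,reg}(\tilde{B})$ and $\y$-semiregularity does not demand $\y$-$d$-regularity there. You notice this, but the fix you offer --- that inequality \eqref{eq:row_col} fails at $d=d^{*}-1$, ``so that no degree fall is possible there'' --- does not follow: rows $\le$ columns only means a dependence is not \emph{forced} by counting; it does not exclude one. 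What actually closes the case is that at $d=d^{*}-1$ the matrix $\mathbf{M}_{\y,d}(\tilde{B})$ is square (the two binomial counts coincide precisely because the fraction is an integer), and Definition~\ref{def:deg_reg} at $d=d_{\y,reg}(\tilde{B})$ gives $\rank\bigl(\mathbf{M}_{\y,d}(\tilde{B})\bigr)=n_x\binom{n_y+d-2}{d-1}$, which equals the number of rows; hence the rows are independent and no syzygy of degree $d-2$ in $\FF[\y]^{m}$ exists at that degree. This square-and-full-rank observation is exactly the step the paper's proof inserts for the case $d_{\y,reg}(\tilde{B})=\tilde{d}-1$, and your argument is incomplete without it.
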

\begin{proof}
Let $n_{x},n_y,m$ be positive integers with $n_{x} + n_{y}\leq m$. Let $B=(f_1,\dots,f_m) \in \B(n_x,n_y,m)$ be a $\y$-semiregular sequence with quadratic homogeneous part denoted by $\tilde{B}=(\tilde{f}_1,\dots,\tilde{f}_m) \in \B^{(h)}(n_{x},n_{y},m)$.
Set
\[
\tilde{d}=\min \left\lbrace d \in \mathbb{Z} ^{+} \; \vert  \;  d > \frac{n_x(n_y-1)}{m-n_x} +1\right\rbrace .
\]
By Proposition \ref{prop:deg_reg} $d_{\y,reg}(\tilde{B}) = \left\lceil \frac{n_x(n_y-1)}{m-n_x}\right\rceil +1$. So $d_{\y,reg}(\tilde{B}) = \tilde{d}$ or $d_{\y,reg}(\tilde{B}) = \tilde{d}-1$. In any case, if $j<\tilde{d}$ we have
\begin{equation*}
\rank \left( \textbf{M}_{\y, j}(\tilde{B})\right)
=
m \binom{n_{y} + j -3}{j-2}
\le
n_{x}\binom{n_{y} + j -2}{j-1}.
\end{equation*}
The case  $d_{\y,reg}(\tilde{B}) = \tilde{d}-1$ happens when $\frac{n_x(n_y-1)}{m-n_x}$ is an integer, and this is equal to what occurs when the matrix is square and invertible at degree $d_{\y,reg}(\tilde{B})$.
Then, for each $j<\tilde{d}$, the rows of $\textbf{M}_{\y, j}(B)$ are linearly independent, hence there are not degree $j-2$ polynomials $g_{1},\ldots ,g_{m} \in \FF[\y]$ such that $\sum_{i=1}^{m} g_{i}\tilde{f}_{i} = 0$  , which implies there is not $\y$-degree falls of $B$ up to degree $\tilde{d}-1$.
At degree $\tilde{d}$ 
\begin{equation*}
m \binom{n_{y} + \tilde{d} -3}{\tilde{d}-2}
>
n_{x}\binom{n_{y} + \tilde{d} -2}{\tilde{d}-1},
\end{equation*}
then the rows of $\textbf{M}_{\y, \tilde{d}}(B)$ are linearly dependent.
Hence there exist degree $\tilde{d}-2$ polynomials $G = (g_{1},\ldots ,g_{m}) \in \FF[\y]^{m}$ such that $\sum _{i=1}^{m}g_{i}\tilde{f}_{i} = 0$, and by Proposition \ref{prop:trivial_syzygy}, $G$ is a non-trivial syzygy of $\tilde{B}$, thus $\tilde{d}$ is the $\y$-first fall degree of $B$.
\end{proof}

\begin{remark}
Note that if $B$ is a $\y$-semiregular sequence, then $d_{\y-ff}(B) = d_{\y,reg}(\tilde{B})+1$ always when $m-n_x$ divides $n_x(n_y-1)$. Otherwise, $d_{\y-ff}(B) = d_{\y,reg}(\tilde{B})$.
\end{remark}



\subsection{The Witness Degree}\label{sec:witnessdegree}

In this section we define and estimate the $\y$-witness degree for a bilinear sequences.
Analogously to the witness degree definition in \citep{booleansolve}, the $\y$-witness degree for an bilinear sequence $B$ is defined as the minimum integer $d$ such that all the polynomials in $I_{\y, \leq d}(B)$ can be written as an $\FF$-linear combination of the polynomials represented by the rows of the $\y$-Macaulay matrix $\textbf{M}_{\y, \leq d}(B)$. More precisely:

\begin{definition}[$\y$-Witness Degree]
Suppose $\FF$ is a field with $q$ elements and $B = \left(f_{1}, \ldots , f_{m}\right)$ is a bilinear sequence in $\B(n_{x},n_{y},m)$. The $\y$-witness degree $d_{\y,\wit}(B)$ of $B$ is defined as
\begin{displaymath}
d_{\y,\wit}(B) = \min\left\lbrace d\in \mathbb{Z} ^{+} \; \vert \; I_{\y,\leq d}(B) = J_{\y,\leq d}(B) \right\rbrace.
\end{displaymath}
\end{definition}


The $\y$-witness degree of a bilinear sequence $B$ can be upper-bounded in most cases by the
$\tilde{\y}$-degree of regularity of its homogenization $B^{(h)}$ (see Theorem \ref{teo:witnessbound}), which is simply the homogeneous bilinear sequence containing the $\tilde{\y}$-homogenization of the polynomials in $B$, as explained in the following definition.

\begin{definition}[$\tilde{\y}$-homogenization]
Let  $f $ be a polynomial in the $\FF$-span of $\cup _{j=0}^{\infty}\FF[\x,\y]_{1,j} \cup \FF[\y]$. We define the  $\tilde{\y}$-homogenization of $f$ as the homogeneous polynomial $f^{(h)} \in  \FF[\tilde{\x},\tilde{\y}]$ given by
\begin{displaymath}
f^{(h)}(\tilde{\x},\tilde{\y}) = x_{0}y_{0}^{\deg (f)-1}f\left(\frac{x_1}{x_0},\ldots ,\frac{x_{n_x}}{x_0}, \frac{y_1}{y_0},\ldots , \frac{y_{n_y}}{y_0} \right),
\end{displaymath}
where $\tilde{\x} = (x_{0}, \x)$ and $\tilde{\y} = (y_{0}, \y)$ are sets of variables with sizes $n_x +1$ and $n_y+1$, respectively. Conversely, if $\tilde{f}$ is a homogeneous polynomial in $\FF[\tilde{\x},\tilde{\y}]$, with $\tilde{\x} = (x_{0}, \x)$ and $\tilde{\y} = (y_{0}, \y)$, we define its $\y$-\textit{dehomogenization} as the polynomial $\tilde{f}(1,\x,1,\y)$ in $\FF[\x,\y]$. For a sequence $B = \left(f_1,\ldots, f_m\right)$, where $f_1,\ldots, f_m\in \cup _{j=0}^{\infty}\FF[\x,\y]_{1,j} \cup \FF[\y]$, we define its $\tilde{\y}$-homogenization $B^{(h)}$ as the sequence $B^{(h)} =  \left(f_1^{(h)},\ldots, f_m^{(h)}\right)$. Finally, for a sequence $\tilde{B}$ of homogeneous polynomials, we define its $\y$-dehomogenization as the sequence $B$ in which the $i$-th component is the $\y$-dehomogenization of the $i$-th component of $\tilde{B}$.
\end{definition}


In particular, if  $f = \sum_{i=1}^{n_x}\sum _{j=1}^{n_y} a_{i,j}x_{i}y_{j} + \sum_{i=1}^{n_x} b_{i}x_{i} + \sum_{j=1}^{n_y} c_{j}y_{j} +d  \in \FF[\x,\y]$ is a bilinear polynomial, the $\tilde{\y}$-homogenization $f^{(h)}$ of $f$ is the homogeneous bilinear polynomial in the sets of variables $\tilde{\x} = (x_{0}, \x)$, $\tilde{\y} = (y_{0}, \y)$, given by
\begin{displaymath}
f^{(h)}(\tilde{\x},\tilde{\y}) = \sum _{i=0}^{n_x}\sum _{j =0}^{n_y} a_{i,j}x_{i}y_{j}, 
\end{displaymath}
where $a_{i,0} = b_{i}$ for $i>0$, $a_{0,j} = c_{j}$ for $j >0$ and $a_{0,0} = d$. 

\begin{theorem}\label{teo:witnessbound}
Let $n_{x},n_y,m$ be positive integers with $n_{x} + n_y\leq  m -2$. If Conjecture \ref{conj:generic_semiregular} is true, then there is an open Zariski set $O \subset \B(n_x,n_y,m)$ such that each $B \in O$ satisfies the following property: the system $B=\textbf{0}$ has no solution and $1$ belongs to the $\FF$-vector space generated by the polynomials representing the rows of the $\y$-Macaulay matrix $\textbf{M}_{\y,\leq \tilde{d}}(B)$, where 
\begin{equation*}
    \tilde{d} = \left\lceil \frac{n_y(n_x+1)}{m-n_x-1} \right\rceil + 1.
\end{equation*}
Moreover, $d_{wit}(B) \leq \tilde{d}$. 
\end{theorem}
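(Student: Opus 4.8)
The plan is to lift the problem to the homogeneous setting via $\tilde{\y}$-homogenization and then read off the conclusion by dehomogenizing. The first observation is that $B\mapsto B^{(h)}$ is a linear isomorphism $\B(n_x,n_y,m)\to\B^{(h)}(n_x+1,n_y+1,m)$, since the coefficients of each homogeneous bilinear form $f_k^{(h)}$ in $\tilde{\x}=(x_0,\x)$, $\tilde{\y}=(y_0,\y)$ are just a relabelling of the $n_xn_y+n_x+n_y+1$ coefficients of $f_k$. As $n_x+n_y\le m-2$ is exactly the hypothesis $(n_x+1)+(n_y+1)\le m$, Conjecture \ref{conj:generic_semiregular} applied with parameters $n_x+1,n_y+1,m$ produces a nonempty open Zariski set of $\tilde{\y}$-semiregular sequences in $\B^{(h)}(n_x+1,n_y+1,m)$; I would let $O$ be the preimage of this set under the isomorphism, intersected with the (nonempty, open) locus on which every $f_k$ has nonzero bilinear part, so that each $f_k$ has degree exactly $2$ and $\tilde{\y}$-homogenization is a genuine polynomial on the elements of $I_\y(B)$.

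Next I would apply Proposition \ref{prop:deg_reg} to the $\tilde{\y}$-semiregular sequence $B^{(h)}$, obtaining $d_{\tilde{\y},reg}(B^{(h)})=\lceil n_y(n_x+1)/(m-n_x-1)\rceil+1=\tilde d$; by Definition \ref{def:deg_reg} this forces $J_{\tilde{\y},\tilde d}(B^{(h)})=\FF[\tilde{\x},\tilde{\y}]_{1,\tilde d-1}$, since the left-hand side is contained in the right-hand one and both have the same dimension. In particular the monomial $x_0y_0^{\tilde d-1}$ can be written as $\sum_{i=1}^m G_if_i^{(h)}$ with each $G_i\in\FF[\tilde{\y}]$ homogeneous of degree $\tilde d-2$. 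Applying the ring homomorphism $x_0\mapsto1$, $y_0\mapsto1$ — which is exactly $\y$-dehomogenization and sends $f_i^{(h)}$ to $f_i$ — specializes this identity to $1=\sum_i G_i(1,\y)f_i$ with $\deg G_i(1,\y)\le\tilde d-2$, so $1\in J_{\y,\le\tilde d}(B)$, i.e.\ $1$ lies in the $\FF$-span of the rows of $\textbf{M}_{\y,\le\tilde d}(B)$. Evaluating this identity at any putative common zero of $B$ would give $1=0$, so $B=\textbf{0}$ has no solution.

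For the bound $d_{\y,\wit}(B)\le\tilde d$ it suffices to show $I_{\y,\le\tilde d}(B)\subseteq J_{\y,\le\tilde d}(B)$, the reverse inclusion being Remark \ref{rem:IvsJ}. Given $h=\sum g_if_i\in I_{\y,\le\tilde d}(B)$, note that $h$ is linear in $\x$ and, because every $f_k$ has nonzero bilinear part, the degree-$\deg h$ component of $h$ equals $\sum_{\deg g_i=\deg h-2}(g_i)_{\deg h-2}\,\tilde f_i$ and has $\x$-degree exactly $1$; hence the $\tilde{\y}$-homogenization $h^{(h)}\in\FF[\tilde{\x},\tilde{\y}]_{1,\deg h-1}$ is well defined and $\bar h:=y_0^{\tilde d-\deg h}h^{(h)}$ lies in $\FF[\tilde{\x},\tilde{\y}]_{1,\tilde d-1}=J_{\tilde{\y},\tilde d}(B^{(h)})$. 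Writing $\bar h=\sum G_if_i^{(h)}$ with $\deg G_i=\tilde d-2$ and dehomogenizing (observing that $\bar h$ dehomogenizes to $h$) yields $h=\sum G_i(1,\y)f_i\in J_{\y,\le\tilde d}(B)$, so $\tilde d$ witnesses $I_{\y,\le\tilde d}(B)=J_{\y,\le\tilde d}(B)$. I expect the main difficulty to be the homogenization bookkeeping — verifying that $\tilde{\y}$-homogenization is a polynomial on $I_\y(B)$ and commutes with the module relations over $O$, and that $\y$-dehomogenization returns precisely the affine identities; the remaining content is just the arithmetic of Proposition \ref{prop:deg_reg} with shifted parameters.
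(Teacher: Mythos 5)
Your first two paragraphs follow the paper's proof essentially step for step: the same identification of $\B(n_x,n_y,m)$ with $\B^{(h)}(n_x+1,n_y+1,m)$ by relabelling the $(n_x+1)(n_y+1)$ coefficients (used to transport the open set from Conjecture \ref{conj:generic_semiregular} and to argue that $O$ is Zariski open), the same application of Proposition \ref{prop:deg_reg} with the shifted parameters to get $d_{\tilde{\y},reg}(B^{(h)})=\tilde{d}$ and hence $J_{\tilde{\y},\tilde{d}}(B^{(h)})=\FF[\tilde{\x},\tilde{\y}]_{1,\tilde{d}-1}$, and the same dehomogenization of a monomial of that space to place $1$ in $J_{\y,\le\tilde{d}}(B)$ and rule out solutions. (Your choice of $x_0y_0^{\tilde{d}-1}$ is in fact the correct monomial here; the paper writes $y_0^{\tilde{d}}$, which does not lie in $\FF[\tilde{\x},\tilde{\y}]_{1,\tilde{d}-1}$.)

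The step I cannot accept as written is in your argument for $I_{\y,\le\tilde{d}}(B)\subseteq J_{\y,\le\tilde{d}}(B)$, namely the claim that the degree-$\deg h$ component of $h=\sum_i g_if_i$ equals $\sum_{\deg g_i=\deg h-2}(g_i)_{\deg h-2}\tilde{f}_i$ and has $\x$-degree exactly $1$. The definition of $I_{\y,\le\tilde{d}}(B)$ constrains only $\deg h$, not the degrees of the $g_i$: a representation may use multipliers $g_i$ of degree $\tilde{d}-1$ whose bilinear tops cancel --- this is exactly a $\y$-degree fall, and the dimension count of Section \ref{sec:jac_syzygies} shows such syzygies of $\tilde{B}$ do exist at degree $\tilde{d}-1$. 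For such an $h$ the top-degree component also contains $\sum_i (g_i)_{\tilde{d}-1}\cdot(\text{linear-in-}\y\text{ part of }f_i)$, a form of degree $\tilde{d}$ lying purely in $\FF[\y]$; then $y_0^{\tilde{d}-\deg h}h^{(h)}$ acquires a factor $y_0^{-1}$ and is not a polynomial, so your lift into $\FF[\tilde{\x},\tilde{\y}]_{1,\tilde{d}-1}$ breaks down. Worse, no element of $J_{\y,\le\tilde{d}}(B)$ can have a pure-$\y$ homogeneous component of degree exactly $\tilde{d}$ (its degree-$\tilde{d}$ part is always $\sum_i(g_i)_{\tilde{d}-2}\tilde{f}_i\in\FF[\x,\y]_{1,\tilde{d}-1}$), so the inclusion cannot be recovered by choosing a different representation; one must actually show that no element of $I_{\y,\le\tilde{d}}(B)$ has such a component. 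To be fair, the paper's own proof of $d_{\y,\wit}(B)\le\tilde{d}$ passes over exactly the same point --- it proves that all monomials $x_i\mathfrak{m}$ and $\mathfrak{m}$ with $\deg\mathfrak{m}\le\tilde{d}-1$ lie in $J_{\y,\le\tilde{d}}(B)$ and then asserts the witness bound --- so your write-up exposes rather than creates the difficulty; but as it stands your third paragraph, like the paper's final step, needs this additional argument.
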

\begin{proof}
Let $n_{x},n_y,m$ be positive integers with $n_{x} + n_y\leq  m -2$. Assuming veracity of Conjecture \ref{conj:generic_semiregular}, there is an open set $\tilde{O} \subset \B^{(h)}(n_x +1,n_y+1,m)$ such that any homogeneous sequence  $\tilde{B} \in \tilde{O}$ is $\tilde{\y}$-semiregular.
Define $O $ as the set  $\{ \tilde{B}(1,\x,1,\y) \; \vert \; \tilde{B} \in \tilde{O}\} \subset \B(n_x ,n_y,m)$ . We claim that the set $O$ is an open Zariski set. Indeed, each sequence  $\tilde{B}(x_0,\x,y_0,\y) \in \tilde{O}$ can be uniquely identified with a vector in $\FF^{m[(n_x + 1)(n_y +1)]}$, and the same vector identifies the sequence $\tilde{B}(1,\x,1,\y)$. So, $\tilde{O}$ is a Zariski open subset of $\B^{(h)}(n_x +1,n_y+1,m)$ if and only if $O$ is an open Zariski subset of $\B(n_x ,n_y,m)$. 

We will now show that each sequence in $O$ satisfies the property stated in the theorem. Recall that $\tilde{\x} = (x_{0},x_{1},\ldots , x_{n_{x}})$, $\tilde{\y} = (y_{0},y_{1},\ldots , y_{n_{y}})$ are the sets of variables and let $B$ be a bilinear sequence in $O$. Clearly, the $\tilde{\y}$-homogenization $B^{(h)}$ of $B$ is an element in $\tilde{O}$. Since  $B^{(h)}$ is $\tilde{\y}$-semiregular, any monomial of the form $f(\tilde{\x},\tilde{\y}) = x_{i}\tilde{\mathfrak{m}}$, where  $i =0, 1,2,\ldots, n_x$ and $\tilde{\mathfrak{m}} \in \FF[\tilde{\y}]$  is a monomial of degree $d_{\tilde{\y},reg}(B^{(h)})-1$, can be written as an $\FF[\tilde{\y}]$-linear combination of polynomials in $B^{(h)}$. That is, assuming 
$B^{(h)}(\tilde{\x},\tilde{\y}) = \left(f_1(\tilde{\x},\tilde{\y}),\ldots , f_m(\tilde{\x},\tilde{\y})\right) $, we have
\begin{equation*}
    f(\tilde{\x},\tilde{\y}) = \sum _{i=1}^{m} g_i(\tilde{\y})f_i(\tilde{\x},\tilde{\y}),
\end{equation*}
for some  $g_i(\tilde{\y})\in \FF[\tilde{\y}]$ having degree $d_{\tilde{\y},reg}(B^{(h)})-2$. Consequently, the polynomial $f(1,\x,1,\y)$, which is the $\y$-dehomogenization of $f(\tilde{\x},\tilde{\y})$, can be written as
\begin{equation}\label{eq:com}
    f(1,\x,1,\y) = \sum _{i=1}^{m} g_i(1,\y)f_i(1,\x,1,\y),
\end{equation}
where each $g_i(1,\y)$ is in  $\FF[\y]$ and has degree at most  $d_{\tilde{\y},reg}(B^{(h)})-2$. This means $f(1,\x,1,\y)$ belongs to $J_{\y, \leq \tilde{d}}(B)$, where $\tilde{d}= d_{\tilde{\y},reg}(B^{(h)})$. 

Notice that for every monomial $\mathfrak{m} \in \FF[\y]$ of degree at most $ d_{\tilde{\y},reg}(B^{(h)})-1$ and every variable $x_i $ in $\x$, there are monomials $h_1(\tilde{\x},\tilde{\y})=x_{0}\mathfrak{m}$ and $h_2(\tilde{\x}, \tilde{\y})=x_{i}\mathfrak{m}$ in $\FF[\tilde{\x},\tilde{\y}]$ of degree at most $d_{\tilde{\y},reg}(B^{(h)})$, such that $\mathfrak{m} = h_1(1,\x, 1,\y) $ and  $x_{i}\mathfrak{m} = h_2(1,\x, 1,\y) $.  Therefore, Equation \eqref{eq:com} implies that every monomial like $x_{i}\mathfrak{m}$ or $\mathfrak{m}$ belong to $J_{\y, \leq \tilde{d}}(B)$, where $B(\x,\y) = \left(f_1(1,\x,1,\y), \ldots , f_m(1,\x,1,\y)\right)$. This implies $d_{\wit}(B) \leq d_{\tilde{\y}, reg}(B^{(h)})$.
Notice that for the particular case $f(\tilde{\x},\tilde{\y}) = y_{0}^{d_{\tilde{\y},reg}(B)}$, we get that $f(1,\x,1,\y) = 1$ belongs to $J_{\y,\leq \tilde{d}}(B)$. Hence the system $B=\textbf{0}$ has no solution.

Finally, Proposition \ref{prop:deg_reg} implies  
\begin{equation*}
   d_{wit}(B) \leq \left\lceil \frac{n_y(n_x+1)}{m-n_x-1} \right\rceil + 1.
\end{equation*}
\end{proof}

\begin{corollary}\label{coro:condition}
Let $n_x, n_y,m$ be positive integers such that $n_x + n_y \leq m-2$. Then, there is a set $\mathcal{S} \subseteq \B(n_x,n_y,m)$ containing an open Zariski set such that every $B \in \mathcal{S}$ satisfies the following condition: The system $B=\textbf{0}$ has a solution if and only if $1 \notin J_{\y,\leq \tilde{d}}(B)$, with
\begin{equation*}
\tilde{d} = \left\lceil \frac{n_y(n_x+1)}{m-n_x-1} \right\rceil + 1.
\end{equation*}
\end{corollary}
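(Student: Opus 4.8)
The plan is to obtain the equivalence as a consequence of Theorem~\ref{teo:witnessbound}, together with the elementary observation that the constant polynomial $1$ cannot belong to $J_{\y,\leq\tilde{d}}(B)$ whenever $B=\textbf{0}$ is consistent. Accordingly, with $\tilde{d} = \left\lceil n_y(n_x+1)/(m-n_x-1)\right\rceil + 1$, I would take
\begin{equation*}
\mathcal{S} = \left\{B \in \B(n_x,n_y,m) \; \vert \; B=\textbf{0} \text{ has a solution over } \FF\right\} \cup \left\{B \in \B(n_x,n_y,m) \; \vert \; 1 \in J_{\y,\leq\tilde{d}}(B)\right\},
\end{equation*}
and then verify (a) that $\mathcal{S}$ contains an open Zariski set, and (b) that every $B\in\mathcal{S}$ satisfies the stated biconditional.

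For (a): since $n_x+n_y\leq m-2$, Theorem~\ref{teo:witnessbound} (carrying along its dependence on Conjecture~\ref{conj:generic_semiregular}) yields an open Zariski set $O\subset\B(n_x,n_y,m)$ each of whose members $B$ satisfies $1\in J_{\y,\leq\tilde{d}}(B)$. Hence $O$ lies inside the second set defining $\mathcal{S}$, so $O\subseteq\mathcal{S}$ and $\mathcal{S}$ contains an open Zariski set.

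For (b): I would handle the two implications separately. The direction ``$B=\textbf{0}$ has a solution $\Longrightarrow 1\notin J_{\y,\leq\tilde{d}}(B)$'' actually holds for \emph{every} bilinear sequence, not just those in $\mathcal{S}$: if $(\textbf{a},\textbf{b})$ is a common zero of $B=(f_1,\dots,f_m)$ over $\FF$, then any $h=\sum_{i=1}^m g_if_i$ with $g_i\in\FF[\y]$ evaluates to $h(\textbf{a},\textbf{b})=\sum_i g_i(\textbf{b})f_i(\textbf{a},\textbf{b})=0$, so every element of $J_{\y,\leq\tilde{d}}(B)$ vanishes at $(\textbf{a},\textbf{b})$ while $1$ vanishes nowhere; hence $1\notin J_{\y,\leq\tilde{d}}(B)$. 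For the converse I would argue by contraposition: if $B=\textbf{0}$ has no solution over $\FF$ then, since $B\in\mathcal{S}$ and $B$ does not belong to the first set defining $\mathcal{S}$, it must belong to the second, i.e. $1\in J_{\y,\leq\tilde{d}}(B)$; this is exactly the contrapositive of ``$1\notin J_{\y,\leq\tilde{d}}(B)\Longrightarrow B=\textbf{0}$ has a solution over $\FF$''. The two implications together give the biconditional throughout $\mathcal{S}$.

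There is no genuinely hard computational step here: the substantive content (the value of $\tilde{d}$, the existence of the open set, and the degree bounds on the cofactors) is already packaged in Theorem~\ref{teo:witnessbound} and Proposition~\ref{prop:deg_reg}. The only point requiring care --- and the one I expect to be the crux of writing the proof cleanly --- is the logical bookkeeping: one direction of the equivalence is universally valid while the other is only generic, so $\mathcal{S}$ must be chosen so as to both contain a Zariski-open set and coincide with the locus on which both directions hold; the union displayed above achieves exactly this.
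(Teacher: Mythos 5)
Your proposal is correct and takes essentially the same route as the paper: the paper defines $\mathcal{S}$ as the union of the open set $O$ from Theorem~\ref{teo:witnessbound} with the set of solvable sequences, and your choice (replacing $O$ by the slightly larger locus where $1 \in J_{\y,\leq \tilde{d}}(B)$, which contains $O$ by that theorem) changes nothing substantive. Your explicit verification of the two implications --- the universally valid evaluation argument for ``solvable $\Rightarrow 1 \notin J_{\y,\leq \tilde{d}}(B)$'' and the contrapositive for the converse --- merely spells out what the paper's one-line proof leaves implicit.
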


\begin{proof}
   The set $\mathcal{S}$ is the union of the set $O$ from Theorem \ref{teo:witnessbound} and the set of sequences $B$ such that $B=\textbf{0}$ has a solution.
\end{proof}

A computational way know whether $1 \in J_{\y,\tilde{d}}(B)$ for a a given $B \in \B(n_{x},n_{y},m)$, where $n_x + n_y \leq m-2$, is by testing the solvability of the linear system  $\textbf{z} \cdot \textbf{M}_{\y,\leq \tilde{d}}(B) = \textbf{e} $, where $\textbf{e} = \begin{pmatrix} 0 & \cdots & 0 & 1 \end{pmatrix} \in \FF^{n_x \binom{n_y + \tilde{d} -1}{\tilde{d} -1}}$ is a row vector. Such a system has a solution for $\textbf{z}$ if and only if $1 \in J_{\y,\tilde{d}}(B)$. 

\section{Complexity Analysis}\label{sec:classcomplx}
Here we estimate the complexity of solving over-defined generic bilinear systems over finite fields. We propose three slight variants of XL-like algorithms, specifically designed for solving bilinear systems, namely, $\y$-XL, $\y$-MutantXL,  and $\y$-Hybrid. We analyse the complexity of these algorithms and compare them with the efficiency of the F4 algorithm. All over this section $n_{x},n_{y}$ and $m$ are positive integers with  $n_{x} + n_{y} \leq m -2$ and $n_x \leq n_y$. 

\subsection{$\y$-XL}\label{sec:XL}

$\y$-XL is an algorithm for solving a bilinear system $B=\textbf{0}$.
For $B \in  \B(n_x,n_y,m)$, $\y$-XL looks for a linear polynomial in $J_{\y,\leq d}(B)$, for some integer $d$.
%
Provided the existence of a solution, we say $\y$-XL solves the system $B=\textbf{0}$ at degree $d$, if it finds at least one linear equation in $J_{\y,\leq d}(B)$. Otherwise, we say it does not solve the system at degree $d$. The minimum degree $d$ at which $\y$-XL solves a bilinear sequence $B$ is denoted by $\y$-XL$_{sol}(B)$.

\begin{algorithm}
   \caption{$\y$-XL}
    \begin{algorithmic}[1]
      \Function{$\y$-XL}{$B,d$}\Comment{Where $B \in \B(n_x,n_y,m)$ - $d \in \mathbb{Z}^{+}$}
        \State $L = \emptyset $
        \State $\textbf{E} = \text{EchelonForm}\left(\textbf{M}_{\y,\leq d}(B)\right)$
        \State $L = $ Linear polynomials representing rows in $\textbf{E}$.
        \State \textbf{return }L
        \EndFunction
\end{algorithmic}
\end{algorithm}
 
Finding linear polynomials is not the only criterion for deciding whether XL succeeds or not in solving a system, c.f. \citep{Cox:2007:IVA:1204670}. However, studying other termination criteria for $\y$-XL is outside of the scope of this paper.
%
The complexity of $\y$-XL, provided that  $B=\textbf{0}$ has a solution, is upper bounded by the complexity of computing the echelon form of the matrix $\textbf{M}_{\y,\leq d}(B)$, where $d = d_{\y,wit}(B) $ , which is a matrix of size \begin{equation*}
      m\binom{n_{y} + d_{\y,wit}(B)  -2}{d_{\y,wit}(B) -2} \times  (n_x+1)\binom{n_y + d_{\y,wit}(B) -1 }{d_{\y,wit}(B)-1}.
\end{equation*}
%

In most cases, and regardless of whether the system has a solution or not, we can precisely estimate its witness degree.
By Corollary \ref{coro:condition}, for most sequences $B \in \B(n_x,n_y,m)$ with $n_x + n_y \leq m-2$, the complexity of deciding whether or not the system  $B = \textbf{0}$ has a solution (and finding one if it exists) is upper bounded by
\begin{equation*}
 \mathcal{O}\left(m\binom{n_{y} +  \tilde{d} -2}{\tilde{d} -2} \left[n_x \binom{n_y + \tilde{d}-1 }{\tilde{d}-1}\right]^{\omega -1} \right)  \end{equation*}
operations over $\FF$. Here $\tilde{d}$ is as defined in Corollary \ref{coro:condition}, and $\omega$ is the exponent of the complexity of multiplying two square matrices of size $n$.

We experimentally verified this result for $B \in \B(n_x , n_y,m)$ chosen uniformly at random and forced to have a uniform random solution $\textbf{a} \in \FF^{(n_x +1)(n_y +1)}$.
In most cases, linear polynomials can be found by applying $\y$-XL at degree $\tilde{d}$. The results of our experiments are summarized in Tables \ref{tab:tablesummary_1} and \ref{tab:tablesummary_2}.

\subsection{$\y$-MutantXL}\label{sec:yMXL}

This section introduces the mutant variant of $\y$-XL, which we will refer to as $\y$-MXL. The idea here is to apply the same strategy of MutantXL \citep{mxl1,DCabarcas}, but only using monomials involving $\y$ variables. Generally speaking, this strategy consists of taking the degree falls that appear in $\y$-XL, multiplying them by the $\y$-variables and appending them to the set of polynomials. This process is repeated again and again until degree one polynomials appear.

As in the $\y$-XL algorithm, an integer $d$ and  a bilinear sequence $B$ are provided as the input of $\y$-MXL. Similarly, we say that $\y$-MXL solves a system $B=\textbf{0}$ at degree $d$ (provided a solution exists), if it finds linear polynomials. We denote by $\y$-MXL$_{sol}(B)$ the minimum integer $d$ at which $\y$-MXL solves the system $B = \textbf{0}$.

The advantage of $\y$-MXL over $\y$-XL is that the mutant version might finish at degree $d_{\y-ff}(B)$ (or not far from it), which is in general smaller than $d_{\y, wit}(B)$. Moreover, following Section \ref{sec:firstfall}, we can precisely estimate $d_{\y-ff}(B)$ for most bilinear sequences  as
\begin{equation*}  
    T_{ff}(n_x,n_y,m) = \min \left\lbrace d \in \mathbb{Z} ^{+} \; \vert  \;  d > \frac{n_x(n_y-1)}{m-n_x} +1\right\rbrace .
\end{equation*}

Precisely estimating $\y$-MXL$_{sol}(B)$ is an open question.
From the experimental data showed in Tables \ref{tab:tablesummary_1} and \ref{tab:tablesummary_2}, we conjecture that if  $B \in  \B(n_x,n_y,m)$ is $\y$-semiregular and  $ T_{ff}(n_x,n_y,m)  =  T_{ff}(n_x,n_y,m-1)$, then $\y$-MXL$_{sol}(B)=d_{\y-ff}(B)$.  
This conjecture is reasonable because when  $T_{ff}$ is equal for $(n_x,n_y,m)$ and $(n_x,n_y,m-1)$, the number of degree falls is substantial. Therefore, 
 the complexity of solving most systems $B=\textbf{0}$ using $\y$-MXL, where
$B \in \B(n_{x},n_{y},m)$ and $ T_{ff}(n_x,n_y,m)  =  T_{ff}(n_x,n_y,m-1)$ is given by
 \begin{equation*}
    \mathcal{O}\left(m \binom{n_{y} + d_{\y -ff}(B)-2}{d_{\y -ff}(B)-2} \left[n_{x}\binom{n_{y} + d_{\y -ff}(B)-1}{d_{\y-ff}(B)-1}\right]^{\omega-1} \right)
\end{equation*}
multiplications over $\FF$, where $\omega$ is the linear algebra constant and
$ d_{\y - ff}(B) = T_{ff}(n_x,n_y,m) $.

\subsection{$\y$-Hybrid Approach}\label{sec:HXL}

In this section we describe and analyze the complexity of a hybrid algorithm for solving the system $B=\textbf{0}$ for a bilinear sequence $B$. Hybrid approaches for solving generic systems of bilinear equations have been studied by different researchers \citep{booleansolve,hybrid_BFP09,hybrid_improved_BettaleFP12}.
The general idea is to try all possible values for some variables and check the consistency of the resulting partial evaluation.

Throughout this section $\x = (x_{1},\ldots , x_{n_x})$ and $\y = (y_{1},\ldots , y_{n_y})$ are enumerated sets of variables. For integers $a_x$, $a_y$, with  $0\leq a_{x} < n_x$ and $0\leq a_{y} < n_y$, we use $\x_{a_x}$ and $\y_{a_y}$ to denote the vectors of variables $(x_{a_x + 1}, \ldots , x_{n_x})$ and $(y_{a_y + 1}, \ldots , y_{n_y})$, respectively. 

\begin{definition}[Partial Evaluation]\label{def:partial}
Let $B(\x,\y) \in \B(n_x,n_y,m)$, $\textbf{u} =(u_1,\ldots,u_{a_x}) \in \FF^{a_x}$ and $\textbf{v} =(v_1,\ldots,v_{a_y})\in \FF^{a_y}$, where $0\leq a_{x} < n_x$ and $0\leq a_{y} < n_y$. We use $B_{(\textbf{u},\textbf{v})}(\x_{a_x},\y_{a_y})$ to denote the bilinear sequence in $\B(n_x - a_x,n_y -a_y,m)$ given by 
$$ B(u_1,\ldots, u_{a_x},x_{a_x +1},\ldots x_{n_x},v_1,\ldots, v_{a_y},y_{a_y +1},\ldots y_{n_y} ).$$
The sequence $B_{(\textbf{u},\textbf{v})}(\x_{a_x},\y_{a_y})$ is called the partial evaluation of $B$ in $(\textbf{u},\textbf{v})$. For short we use $B_{(\textbf{u},\textbf{v})}$, when the involved variables $\x$, $\y$, $\x_{a_x}$ and $\y_{a_y}$ are clear in the context. 
\end{definition}

Given $B \in \B(n_x,n_y,m)$, the $\y$-HXL$_{a_x,a_y}$ algorithm goes through all pairs of vectors $(\textbf{u},\textbf{v}) \in \FF^{a_x} \times \FF^{a_y} $, and checks the consistency of the partially evaluated bilinear system $B_{(\textbf{u},\textbf{v})}{(\x_{a_x},\y_{a_y})} = \textbf{0}$. It stops when it finds a system $B_{(\textbf{u},\textbf{v})} = \textbf{0}$ being consistent.
This procedure is described in Algorithm \ref{alg:hxl}. This computes a partial solution of the system $B=\textbf{0}$.
It can then be applied recursively until a whole vector $\textbf{u} \in \FF^{n_x}$ or $\textbf{v} \in \FF^{n_y}$ is found, such that the system $B_{(\textbf{u},\textbf{v})} = \textbf{0}$ has a solution.
After this, a complete solution can be found by solving a linear system in the remaining unknown variables.

\begin{algorithm}
   \caption{$\y$\textbf{-HXL}$_{a_x,a_y}$}\label{alg:hxl}
    \begin{algorithmic}[1]
      \Function{$\y$\textbf{-HXL}$_{a_x,a_y}$}{$B$}\Comment{Where $B \in \B(n_x,n_y,m)$}
              \State{ $ d = \left\lceil \frac{(n_y-a_y)(n_x - a_x+1)}{m-n_x +a_x-1} \right\rceil +1$}\label{alg2:step2}
              \State{$\textbf{e} = \begin{pmatrix} 0 & \cdots & 0 & 1 \end{pmatrix} \in \FF^{n_x \binom{n_y + d -1}{d -1}}$}
      \For{$(\textbf{u}, \textbf{v}) \in \FF ^{a_x}\times \FF ^{a_y} $}
        \If{ $\textbf{z} \cdot \textbf{M}_{\y,\leq d}\left(B_
        {(\textbf{u},\textbf{v})}\right) = \textbf{e}$ is inconsistent} \label{alg2:step4}
        \State{\textbf{return}  $(\textbf{u},\textbf{v})$} 
        \EndIf
    \EndFor
    \EndFunction
\end{algorithmic}
\end{algorithm}


For $(\textbf{u},\textbf{v}) \in \FF^{a_x} \times  \FF^{a_y}$ define a random variable $\mathcal{X}_{(\textbf{u},\textbf{v})}$ taking values in the set of bilinear equations $\B(n_x-a_x,n_y-a_y,m)$.
In each realization of $\mathcal{X}_{(\textbf{u},\textbf{v})}$ a sequence $B(\x,\y) \in \B(n_x,n_y,m)$ is chosen and the output of $\mathcal{X}_{(\textbf{u},\textbf{v})}$ is $B_{(\textbf{u},\textbf{v})}$.
If $B \in \B(n_x,n_y,m)$ is chosen uniformly at random,  then the random variable $\mathcal{X}_{(\textbf{u},\textbf{v})}$ is uniform in $\B(n_x-a_x,n_y-a_y,m)$.


By Corollary \ref{coro:condition}, when $n_x + n_y \leq m-2$, there is a subset $\mathcal{S} \subset \B(n_x - a_x,n_y - a_y,m)$ containing an open Zariski set, such that for every $B_{(\textbf{u},\textbf{v})} \in \mathcal{S}$, the consistency (or inconsistency) of the system $B_{(\textbf{u},\textbf{v})} =\textbf{0}$ can be verified by checking the inconsistency (or consistency) of the linear system $\textbf{z} \cdot \textbf{M}_{\y,\leq \tilde{d}}(B_{(\textbf{u},\textbf{v})}) = \textbf{e}$, where
\begin{equation}\label{eq:dtilde}
\tilde{d} = \left\lceil \frac{(n_y-a_y)(n_x - a_x+1)}{m-n_x +a_x-1} \right\rceil +1.
\end{equation}
Moreover, by the same result and since $\mathcal{X}_{(\textbf{u},\textbf{v})}$ is a uniform random variable, the $\y$-witness degree of the partially evaluated sequence is expected to be upper bounded by $\tilde{d}$. That is why this is the value chosen for $d$ in step \ref{alg2:step2} of Algorithm \ref{alg:hxl}.

It is possible and advantageous to use Wiedemann's Algorithm to check the consistency of the linear system in  step \ref{alg2:step4} of Algorithm \ref{alg:hxl}.
Given a matrix $\textbf{A}$ with coordinates in $\FF$, and provided that the system $\textbf{z}\cdot \textbf{A} = \textbf{b}$ has at least a solution for $\textbf{z}$, Wiedemann's Algorithm returns a solution by performing an expected number of operations over $\FF$ upper bounded by $\mathcal{O}\left(n_0 (\eta + n_1\log n_1) \log n_0 \right)$. Here $\eta$ is the number of non-zero entries in $\textbf{A}$, and $n_0,n_1$ are the minimum and maximum between the number of rows and columns of $\textbf{A}$, respectively \citep{wiedemann}.

For $a_x, a_y$ fixed, the complexity of $\y$-HLX$_{a_x,a_y}$ (using Wiedemann's algorithm) is given by 
\begin{align*}
    \mathcal{O}\left(q^{a_x + a_y} (n_y - a_y+1)(n_x - a_x +1)^{3}\binom{n_y - a_y + \tilde{d} -1}{\tilde{d} -1}^{2} \right).
\end{align*}

For $a_x, a_y$ fixed, the complexity of  $\y$-HXL$_{a_x,a_y}$ (using Gaussian elimination) is given by 
\begin{align*}
 \mathcal{O}\left(q^{a_x + a_y} m\binom{n_{y} -a_y +  \tilde{d} -2}{\tilde{d} -2} \left[(n_x -a_x +1) \binom{n_y -a_y  + \tilde{d}-1 }{\tilde{d}-1}\right]^{\omega -1} \right), 
\end{align*}
where $\tilde{d}$ is the integer defined in Equation \eqref{eq:dtilde}.

Computing asymptotic formulas for the values $a_{x}$ and $a_{y}$, which lead to an optimal complexity of $\y$-HXL, is out of the scope of this paper. Instead, we use the complexity formulas to find, numerically, the optimal values of $a_x$ and $a_y$ for a given set of parameters $q,m,n_x,n_y$. The results are shown in Table \ref{tab:complexity_comparison}. We also compare this optimal complexity of $\y$-HXL with the complexity of $\y$-MXL, see Section \ref{sec:yMXL}. We can see that in most of the cases $\y$-HXL with the optimal $a_x,a_y$ outperforms $\y$-MXL.  

\subsection{Out-of-the-Box Methods}\label{sec:F4}

In order to solve a bilinear systems, it is also possible to use an out-of-the-box algorithm, for example applying the F4 algorithm or trying all possible values of the $\x$-variables.
Here, we estimate the complexity of this approach, as a point of reference for comparison.

As mentioned in Section \ref{sec:poly_solving}, the complexity of solving a system of polynomial equations $B=\textbf{0}$, where $B$ is semiregular, can be estimated by using the first degree fall. 
Bilinear sequences in $\FF[\x,\y]$ are not s, they form a relative small set of polynomials in the set of all  quadratic polynomials in $\FF[\x,\y]$. Thus we would not expect that the first fall degree $d_{ff}(B)$ of a bilinear sequence $B$ is also its solving degree. However, in all the experiments we were able to conduct, this was the case  (see Tables \ref{tab:tablesummary_1} and \ref{tab:tablesummary_2}). Therefore, we estimate that the complexity of solving a bilinear system $B = \textbf{0}$ using F4, where $B \in \B(n_x,n_y,m)$ is $\y$-semiregular, is given by
\begin{equation*}
    \mathcal{O}\left(\binom{n_{x} + n_{y} + d_{ff}(B)}{d_{ff}(B)}^{\omega}\right).
\end{equation*}

Another way to solve a bilinear system is by trying all possible values of the variables from one set ($\x$ or $\y$) and then check the consistency of the remaining linear system. Since $n_x \leq n_y$, it is better to test all $\x$ variables. Notice this simple method can be seen as a special case of the algorithm $\y$-HXL$_{a_x,a_y}$ when $a_x = n_x $ and $a_y = 0$. The complexity of this method is
\begin{equation*}
      \mathcal{O}\left(q^{n_x} m n_y^{\omega -1} \right).
\end{equation*}
\section{Experimental Results}\label{sec:experimental}

In this section we show some experimental results that confirm the theoretical findings of the paper, illustrate some of the results, and fill in some gaps.

In order to evaluate the validity of Conjecture \ref{conj:generic_semiregular}, we performed the experiment described in Algorithm \ref{alg:semiregulairy_test}, whose results are presented in Table \ref{tab:conjecture}.
They show that with high probability, a randomly chosen bilinear sequence is $\y$-semiregular, supporting the validity of the conjecture.

\begin{algorithm}[H]
\caption{Randomly Testing $\y$-Semiregulary}\label{alg:semiregulairy_test}
\textbf{Input: }Positive integers $n_x,n_y,m$ such that $n_x + n_y \leq m$\\
\textbf{Output: }$\mathsf{True}$, if a randomly chosen bilinear sequence is $\y$-semiregular. $\mathsf{False}$, otherwise. 
\begin{algorithmic}[1]
\State{$B \leftarrow \B^{(h)}(n_x,n_y,m)$}\vspace{0.5em} \Comment{Uniformly at random}
\State{$\tilde{d} \leftarrow \left\lceil \frac{n_x(n_y-1)}{m-n_x}\right\rceil +1$} \vspace{0.5em}
\State{$r \leftarrow \binom{n_y + \tilde{d}-3}{\tilde{d}-3} + n_x \binom{n_y  + \tilde{d} -2}{\tilde{d}-1}$}
\vspace{0.5em}
\State{$\textbf{M} \leftarrow \textbf{M}_{\y,\leq \tilde{d}}(B)$}\vspace{0.5em}
\If{\rank(\textbf{M}) = r} \label{testing_step5}
    \State{\textbf{return $\mathsf{True}$}}
    \Else 
        \State{\textbf{return} $\mathsf{False}$}
\EndIf
\end{algorithmic}
\end{algorithm}

Algorithm \ref{alg:semiregulairy_test} indeed checks whether a sequence $B\in \B^{(h)}(n_x,n_y,m)$ is $\y$-semiregular.
In general, for every $2\leq j < \tilde{d}$, we have that $\rank \left( \textbf{M}_{\y,j}(B)\right) \leq  m \binom{n_y + j-3}{j-2} $ and   $\rank \left( \textbf{M}_{\y,\tilde{d}}(B)\right) \leq  n_x \binom{n_y + \tilde{d}-2}{\tilde{d}-1}$ (see Section \ref{sec:jac_syzygies}).  
In particular, when $B$ is homogeneous, the rank of the whole $\y$-Macaulay matrix is given by
$ \rank\left(\textbf{M}_{\y,\leq \tilde{d}}(B)\right) = \sum _{j=1}^{\tilde{d}}\rank \left( \textbf{M}_{\y,j}(B)\right).$
Thus, when the condition in step \ref{testing_step5} is satisfied, we guarantee that for each $j =1,\ldots ,\tilde{d}-1$
$$ \rank \left( \textbf{M}_{\y,j}(B)\right) = m \binom{n_y  + j -3}{j-2},$$
and at the same time 
$$ \rank \left( \textbf{M}_{\y,\tilde{d}}(B)\right) =n_x \binom{n_y  + \tilde{d} -2}{\tilde{d}-1}, $$
which means $B$ is $\y$-semiregular.

Table \ref{tab:conjecture} shows notable variations in the percentage of $\y$-semiregular sequences across different parameters. This phenomenon deserves some explanation.
For every choice of parameters such that $(m-n_x)$ does not divide $n_x(n_y -1)$, the probability that a randomly chosen sequence $B \in \B(n_x,n_y,m)$ happens to be $\y$-semiregular is overwhelming. In the other cases, when $(m-n_x)$ does divide $n_x(n_y -1)$, that probability is at least $85\%$. This difference is because, when $n_x(n_y -1)/(m-n_x)$ is an integer, we have that for $\tilde{d} =  n_x(n_y -1)/(m-n_x) +1$
$$ n_x \binom{n_y + \tilde{d}-2 }{\tilde{d}-1} = m \binom{n_y + \tilde{d}-3 }{\tilde{d}-2}, $$
and thus, if $B \in \B^{(h)}(n_x,n_y,m)$ the submatrix $\textbf{M}_{\y,\tilde{d}}(B)$ is a square matrix.
Since the coefficients of this matrix are on a finite field, we expect it to be invertible with near-one but non-overwhelming probability in the size of the field $\FF$.

\begin{table}[H]
\centering
\caption{Experimental results to verify Conjecture \ref{conj:generic_semiregular}. Here we use $n_x =4$, the column $\%$ shows the percentage of times the chosen sequence $B \in \B^{(h)}(n_x,n_y,m)$ was $\y$-semiregular, and $\tilde{d}$, which is equal to $\lceil n_x(n_y-1)/(m-n_x)\rceil +1$, indicates the $\y$-degree of regularity according to Proposition \ref{prop:deg_reg}. The random experiment described in Algorithm \ref{alg:semiregulairy_test} was executed a hundred of times for every choice of parameters.}\label{tab:conjecture}. 
\begin{tabular}[h]{llllllllllll}
\toprule
$n_y$ & $m$ & $\tilde{d}$ & $\%$ & $n_y$ & $m$ & $\tilde{d}$ & $\%$ & $n_y$ & $m$ & $\tilde{d}$ & $\%$ \\ \midrule
\multirow{9}{*}{4} & 8 & 4 & 88 & 5 & 18 & 3 & 100 & \multirow{6}{*}{7} & 17 & 3 & 100 \\ \cmidrule{2-4}\cmidrule{5-8}\cmidrule{10-12}
 & 9 & 4 & 100 & \multirow{11}{*}{6} & 10 & 5 & 99 &  & 18 & 3 & 100 \\  \cmidrule{2-4}\cmidrule{6-8}\cmidrule{10-12}
 & 10 & 3 & 93 &  & 11 & 4 & 100 &  & 19 & 3 & 100 \\ \cmidrule{2-4}\cmidrule{6-8}\cmidrule{10-12}
 & 11 & 3 & 100 &  & 12 & 4 & 100 &  & 20 & 3 & 100 \\  \cmidrule{2-4}\cmidrule{6-8}\cmidrule{10-12}
 & 12 & 3 & 100 &  & 13 & 4 & 100 &  & 21 & 3 & 100 \\  \cmidrule{2-4}\cmidrule{6-8}\cmidrule{10-12}
 & 13 & 3 & 100 &  & 14 & 3 & 92 &  & 22 & 3 & 100 \\  \cmidrule{2-4}\cmidrule{6-8}\cmidrule{9-12}
 & 14 & 3 & 100 &  & 15 & 3 & 100 & \multirow{12}{*}{8} & 12 & 5 & 98 \\   \cmidrule{2-4}\cmidrule{6-8}\cmidrule{10-12}
 & 15 & 3 & 99 &  & 16 & 3 & 100 &  & 13 & 5 & 100 \\   \cmidrule{2-4}\cmidrule{6-8}\cmidrule{10-12}
 & 16 & 2 & 93 &  & 17 & 3 & 100 &  & 14 & 4 & 100 \\   \cmidrule{1-4}\cmidrule{6-8}\cmidrule{10-12}
 & 9 & 5 & 99 &  & 18 & 3 & 100 &  & 15 & 4 & 100 \\   \cmidrule{2-4}\cmidrule{6-8}\cmidrule{10-12}
\multirow{9}{*}{5} & 10 & 4 & 100 &  & 19 & 3 & 100 &  & 16 & 4 & 100 \\   \cmidrule{2-4}\cmidrule{6-8}\cmidrule{10-12}
 & 11 & 4 & 100 &  & 20 & 3 & 100 &  & 17 & 4 & 100 \\   \cmidrule{2-4}\cmidrule{5-8}\cmidrule{10-12}
 & 12 & 3 & 90 & \multirow{6}{*}{7} & 11 & 5 & 100 &  & 18 & 3 & 92 \\   \cmidrule{2-4}\cmidrule{6-8}\cmidrule{10-12}
 & 13 & 3 & 100 &  & 12 & 4 & 86 &  & 19 & 3 & 100 \\   \cmidrule{2-4}\cmidrule{6-8}\cmidrule{10-12}
 & 14 & 3 & 100 &  & 13 & 4 & 100 &  & 20 & 3 & 100 \\   \cmidrule{2-4}\cmidrule{6-8}\cmidrule{10-12}
 & 15 & 3 & 100 &  & 14 & 4 & 100 &  & 21 & 3 & 100 \\   \cmidrule{2-4}\cmidrule{6-8}\cmidrule{10-12}
 & 16 & 3 & 100 &  & 15 & 4 & 100 &  & 22 & 3 & 100 \\   \cmidrule{2-4}\cmidrule{6-8}\cmidrule{10-12}
 & 17 & 3 & 100 &  & 16 & 3 & 88 &  & 23 & 3 & 100 \\ \bottomrule
\end{tabular}
\end{table}

Tables \ref{tab:tablesummary_1} and \ref{tab:tablesummary_2} serve to compare our theoretical estimates with experimental results for the various algorithms on randomly chosen bilinear sequences.
It is worth comparing the solving degree of $\y$-XL and $\y$-MXL, which is the minimum degree where the algorithms find linear polynomials in $I_{\y}$ (see sections \ref{sec:XL} and \ref{sec:yMXL}).
It is known that, for every $B\in\B(n_x,n_y,m)$, with $n_x + n_y \leq m-2$, we have $\y\text{-MXL}_{\text{sol}}(B) \leq \y\text{-XL}_{\text{sol}}(B)$.  However, the inequality might be strict in some cases. The tables show that this is indeed the case for some parameters.
Notice that for certain parameters $\y\text{-MXL}_{\text{sol}}(B) = \y\text{-XL}_{\text{sol}}(B)-1$.
For such parameters $\y$-MXL shows an exponential speed up over $\y-$XL.

It is also known that $d_{ff}$ is upper bounded by $ d_{\y-ff}$.
Yet, it might be the case that for some parameters, $d_{ff}$ is strictly less than $d_{\y-ff}$.
However, for every single instance we ran, with $n_x + n_y \leq m$, we observed $d_{ff} = d_{\y-ff}$ and this was also the solving degree of F4.
Based on this, we conjecture that $d_{\y-ff}$ is a tight upper bound for the solving degree of F4 for $\y$-semiregular sequences. Thus, the complexity of solving a system $B = \textbf{0}$, where $B \in \B(n_x,n_y,m)$, is given by
\begin{equation*}
    \mathcal{O}\left(\binom{n_{x} + n_{y} + d_{\y-ff}(B)}{d_{\y-ff}(B)}^{\omega}\right).
\end{equation*}
This is less efficient than $\y$-MXL, provided that $ T_{ff}(n_x,n_y,m)  =  T_{ff}(n_x,n_y,m-1)$, see Section \ref{sec:yMXL}.

\begin{table}[H]
\centering
\caption{Comparison between the solving degrees for $\y$-XL, $\y$-MXL and F4, the $\y$-first fall degree and the first fall degree of a sequence $B \in \B(4,n_y,m)$ chosen uniformly at random over $GF(13)$. The column $T_{wit}$ shows the theoretical upper bound for the solving degree of $\y$-XL given in Theorem \ref{teo:witnessbound} and $T_{ff}$ is the theoretical upper bound for $d_{\y -ff}$ given by Conjecture \ref{conj:generic_semiregular}, as explained in Section \ref{sec:yMXL}. For each sequence $B$, $d_{\y -ff}$ and F4$_{ff}$ are, respectively, the $\y$-first fall degree and the first fall degree of $B$.
$\y$-XL$_{sol}$ and $\y$-MXL$_{sol}$ are the solving degree in $\y$-XL and in $\y$-MXL of $B$, respectively, as defined in Sections \ref{sec:XL} and \ref{sec:yMXL}. F4$_{sol}$ is the maximum degree reached during the Gr\"{o}bner basis computation of the ideal generated by $B$. The five rightmost columns show the most common value  obtained for each set of parameters out of a hundred random instances, and the value in parenthesis indicates the corresponding relative frequency. }\label{tab:tablesummary_1}

\begin{tabular}{ccccccccc}
\toprule
$n_y$ & $m$ & $T_{ff}$ & $T_{wit}$ & $d_{\y-ff}$  & $\y$-XL$_{sol}$ & $\y$-MXL$_{sol}$ & F4$_{ff}$ & F4$_{sol}$ \\ \midrule
& 10 & 4 & 5 & 4 (0.93) & 5 (0.89) & 4 (1.0) & 4 (0.87) & 4 (0.99) \\ \cmidrule{2-9}
 \multirow{7}{*}{4}  & 11 & 3 & 5 & 3 (1.0) & 5 (1.0) & 4 (1.0) & 3 (1.0) & 3 (1.0) \\ \cmidrule{2-9}
 & 12 & 3 & 4 & 3 (1.0) & 4 (1.0) & 3 (1.0) & 3 (1.0) & 3 (1.0) \\ \cmidrule{2-9}
 & 13 & 3 & 4 & 3 (1.0) & 4 (1.0) & 3 (1.0) & 3 (1.0) & 3 (1.0) \\ \cmidrule{2-9}
 & 14 & 3 & 4 & 3 (1.0) & 4 (1.0) & 3 (1.0) & 3 (1.0) & 3 (1.0) \\ \cmidrule{2-9}
 & 15 & 3 & 3 & 3 (1.0) & 3 (0.89) & 3 (1.0) & 3 (1.0) & 3 (1.0) \\ \cmidrule{2-9}
 & 16 & 3 & 3 & 3 (1.0) & 3 (1.0) & 3 (1.0) & 3 (0.89) & 3 (1.0) \\ \midrule
 & 11 & 4 & 6 & 4 (1.0) & 6 (0.98) & 4 (0.98) & 4 (1.0) & 4 (1.0) \\ \cmidrule{2-9}
 \multirow{8}{*}{5} & 12 & 4 & 5 & 4 (0.95) & 5 (1.0) & 4 (1.0) & 4 (0.96) & 4 (1.0) \\ \cmidrule{2-9}
 & 13 & 3 & 5 & 3 (1.0) & 5 (1.0) & 4 (1.0) & 3 (1.0) & 3 (1.0) \\ \cmidrule{2-9}
 & 14 & 3 & 4 & 3 (1.0) & 4 (1.0) & 3 (1.0) & 3 (1.0) & 3 (1.0) \\ \cmidrule{2-9}
 & 15 & 3 & 4 & 3 (1.0) & 4 (1.0) & 3 (1.0) & 3 (1.0) & 3 (1.0) \\ \cmidrule{2-9}
 & 16 & 3 & 4 & 3 (1.0) & 4 (1.0) & 3 (1.0) & 3 (1.0) & 3 (1.0) \\ \cmidrule{2-9}
 & 17 & 3 & 4 & 3 (1.0) & 4 (1.0) & 3 (1.0) & 3 (1.0) & 3 (1.0) \\ \cmidrule{2-9}
 & 18 & 3 & 3 & 3 (1.0) & 3 (1.0) & 3 (1.0) & 3 (1.0) & 3 (1.0) \\ \midrule
 & 12 & 4 & 6 & 4 (1.0) & 6 (0.99) & 4 (0.99) & 4 (1.0) & 4 (1.0) \\ \cmidrule{2-9}
 \multirow{9}{*}{6} & 13 & 4 & 5 & 4 (1.0) & 5 (1.0) & 4 (1.0) & 4 (1.0) & 4 (1.0) \\ \cmidrule{2-9}
 & 14 & 4 & 5 & 4 (0.94) & 5 (1.0) & 4 (1.0) & 4 (0.95) & 4 (1.0) \\ \cmidrule{2-9}
 & 15 & 3 & 4 & 3 (1.0) & 4 (0.95) & 4 (1.0) & 3 (1.0) & 3 (1.0) \\ \cmidrule{2-9}
 & 16 & 3 & 4 & 3 (1.0) & 4 (1.0) & 3 (1.0) & 3 (1.0) & 3 (1.0) \\ \cmidrule{2-9}
 & 17 & 3 & 4 & 3 (1.0) & 4 (1.0) & 3 (1.0) & 3 (1.0) & 3 (1.0) \\ \cmidrule{2-9}
 & 18 & 3 & 4 & 3 (1.0) & 4 (1.0) & 3 (1.0) & 3 (1.0) & 3 (1.0) \\ \cmidrule{2-9}
 & 19 & 3 & 4 & 3 (1.0) & 4 (1.0) & 3 (1.0) & 3 (1.0) & 3 (1.0) \\ \cmidrule{2-9}
 & 20 & 3 & 3 & 3 (1.0) & 3 (0.95) & 3 (1.0) & 3 (1.0) & 3 (1.0) \\ \bottomrule

\end{tabular}
\end{table}

\begin{table}[H]
\centering
\caption{Comparison between the solving degrees for $\y$-XL, $\y$-MXL and F4, the $\y$-first fall degree and the first fall degree of a sequence $B \in \B(4,n_y,m)$ chosen uniformly at random over $GF(13)$. The column $T_{wit}$ shows the theoretical upper bound for solving degree of $\y$-XL  given in Theorem \ref{teo:witnessbound} and  $T_{ff}$ is the theoretical upper bound for $d_{\y -ff}$ given by Conjecture \ref{conj:generic_semiregular}, as explained in Section \ref{sec:yMXL}. For each sequence $B$, $d_{\y -ff}$ and F4$_{ff}$ are, respectively, the $\y$-first fall degree and the first fall degree of $B$.
$\y$-XL$_{sol}$ and $\y$-MXL$_{sol}$ are the solving degree in $\y$-XL and in $\y$-MXL of $B$, respectively, as defined in the sections \ref{sec:XL} and \ref{sec:yMXL}. F4$_{sol}$ is the maximum degree reached during the Gr\"{o}bner basis computation of the ideal generated by $B$. The five rightmost columns shows the most common value  obtained for each set of parameters out of hundred of realizations, the value in parenthesis indicates the corresponding the relative frequency.}\label{tab:tablesummary_2}

\begin{tabular}{ccccccccc}
\toprule
$n_y$ & $m$ & $T_{ff}$ & $T_{wit}$ & $d_{\y-ff}(B)$  & $\y$-XL$_{sol}$ & $\y$-MXL$_{sol}$ & F4$_{ff}$ & F4$_{sol}$ \\ \midrule
  & 13 & 4 & 6 & 4 (1.0) & 6 (1.0) & 5 (1.0) & 4 (1.0) & 4 (1.0) \\ \cmidrule{2-9}
 \multirow{10}{*}{7}  & 14 & 4 & 5 & 4 (1.0) & 5 (1.0) & 4 (1.0) & 4 (1.0) & 4 (1.0) \\ \cmidrule{2-9}
 & 15 & 4 & 5 & 4 (1.0) & 5 (1.0) & 4 (1.0) & 4 (1.0) & 4 (1.0) \\ \cmidrule{2-9}
 & 16 & 4 & 5 & 4 (0.94) & 5 (1.0) & 4 (1.0) & 4 (0.95) & 4 (1.0) \\ \cmidrule{2-9}
 & 17 & 3 & 4 & 3 (1.0) & 4 (1.0) & 4 (1.0) & 3 (1.0) & 3 (1.0) \\ \cmidrule{2-9}
 & 18 & 3 & 4 & 3 (1.0) & 4 (1.0) & 3 (1.0) & 3 (1.0) & 3 (1.0) \\ \cmidrule{2-9}
 & 19 & 3 & 4 & 3 (1.0) & 4 (1.0) & 3 (1.0) & 3 (1.0) & 3 (1.0) \\ \cmidrule{2-9}
 & 20 & 3 & 4 & 3 (1.0) & 4 (1.0) & 3 (1.0) & 3 (1.0) & 3 (1.0) \\ \cmidrule{2-9}
 & 21 & 3 & 4 & 3 (1.0) & 4 (1.0) & 3 (1.0) & 3 (1.0) & 3 (1.0) \\ \cmidrule{2-9}
 & 22 & 3 & 4 & 3 (1.0) & 4 (1.0) & 3 (1.0) & 3 (1.0) & 3 (1.0) \\ \midrule
 & 14 & 4 & 6 & 4 (1.0) & 6 (1.0) & 5 (1.0) & 4 (1.0)& 4 (1.0) \\ \cmidrule{2-9}
 & 15 & 4 & 5 & 4 (1.0) & 5 (0.9) & 4 (1.0) & 4 (1.0) & 4 (1.0) \\ \cmidrule{2-9}
 \multirow{11}{*}{8} & 16 & 4 & 5 & 4 (1.0) & 5 (1.0) & 4 (1.0) & 4 (1.0) & 4 (1.0) \\ \cmidrule{2-9}
 & 17 & 4 & 5 & 4 (1.0) & 5 (1.0) & 4 (1.0) & 4 (1.0) & 4 (1.0) \\ \cmidrule{2-9}
 & 18 & 4 & 5 & 4 (0.91) & 5 (1.0) & 4 (1.0) & 4 (0.92) & 4 (1.0) \\ \cmidrule{2-9}
 & 19 & 3 & 4 & 3 (1.0) & 4 (1.0) & 4 (1.0) & 3 (1.0) & 3 (1.0) \\ \cmidrule{2-9}
 & 20 & 3 & 4 & 3 (1.0) & 4 (1.0) & 3 (1.0) & 3 (1.0) & 3 (1.0) \\ \cmidrule{2-9}
 & 21 & 3 & 4 & 3 (1.0) & 4 (1.0) & 3 (1.0) & 3 (1.0) & 3 (1.0) \\ \cmidrule{2-9}
 & 22 & 3 & 4 & 3 (1.0) & 4 (1.0) & 3 (1.0) & 3 (1.0) & 3 (1.0) \\ \cmidrule{2-9}
 & 23 & 3 & 4 & 3 (1.0) & 4 (1.0) & 3 (1.0) & 3 (1.0) & 3 (1.0) \\ \cmidrule{2-9}
 & 24 & 3 & 4 & 3 (1.0) & 4 (1.0) & 3 (1.0) & 3 (1.0) & 3 (1.0) \\ \bottomrule
\end{tabular}
\end{table}


Based on the complexity estimates in Section \ref{sec:classcomplx}, we compare the complexity of $\y$-MXL and $\y$-HXL for different parameters.
Table \ref{tab:complexity_comparison} illustrates some of the trends.
In the case of $\y$-HXL, for each set of parameters, we report the optimal number of variables to guess and the optimal linear algebra algorithm between Strassen's and Wiedemann's.
In the case of $\y$-MXL, we accept the conjecture that $\y$-MXL$_{sol}(B) = d_{\y,ff}(B)$ for $B \in  \B(n_x,n_y,m)$ provided $ T_{ff}(n_x,n_y,m)  =  T_{ff}(n_x,n_y,m-1)$.

By far, $\y$-HXL outperforms $\y$-MXL. The advantage of $\y$-HXL is specially acute for smaller values of $n_y$, $q$, and $m$, but still significant for larger values.

It is worth noting the behavior of the optimal number of variables to guess in $\y$-MXL.
For small fields, it is better to guess most $\x$-variables. As the size of the field grows, guessing obviously becomes more expensive. Also, as $m$ grows, guessing becomes less attractive, because the witness degree becomes smaller, thus checking consistency becomes less expensive. However, this tendency is less pronounced for larger values of $n_y$, because the witness degree is proportional to $n_y$.

\begin{table}[H]
    \centering
    \caption{Complexity estimates comparison between  $\y$-MXL and $\y$-HXL$_{a_x,a_y}$ for $n_x = 20$, $n_y = 20,30 $ and different values of $q$ and $m$. The columns $MXL$ and $HXL$ indicate the complexity of $\y$-MXL algorithm and $\y$-HXL$_{a_x,a_y}$, respectively. They are computed as $\log _{2}( \ast)$, where $\ast$ are the complexity estimates in Section \ref{sec:classcomplx} for given values $q,n_y,a_x,a_y$.The values $a_x$ and $a_y$ are the ones providing an optimal complexity in $\y$-HXL$_{a_x,a_y}$. The column $Alg$ indicates the linear algebra algorithm given better complexity in $\y$-HXL$_{a_x,a_y}$, 'S' means Strassen's Algorithm while 'W' means Widemann's Algorithm, we use $\omega = 2.8$ in this case. 
    }\label{tab:complexity_comparison}
    \begin{tabular}{ccccccc@{\hskip 0.4in}cccccc}
    \toprule
    $n_y$ &  \multicolumn{6}{c}{ 20}   &  \multicolumn{6}{c}{30}  \\ \hline
        $q$  & $m$ & $MXL$ & $a_x$ & $a_y$ & $HXL$ & $Alg$ & $m$ & $MXL$ & $a_x$ & $a_y$ & $HXL$ & $Alg$ \\ \midrule
         &   42&110&19&0&59&S &52&136&20&0&61&S \\ \cmidrule{2-13}
         \multirow{6}{*}{5} &  46&101&19&0&59&S &56&128&20&0&61&S \\ \cmidrule{2-13}
         &  50&94&19&0&60&S &60&119&20&0&61&S \\ \cmidrule{2-13}
         &  54&90&20&0&60&W &64&115&19&0&61&S  \\ \cmidrule{2-13}
         &  58&86&20&0&60&W &68&110&19&0&61&S \\ \cmidrule{2-13}
         &  62&82&20&0&60&W &72&106&19&0&61&S  \\ \midrule
         &  42&110&19&0&85&S &52&136&20&0&89&S \\ \cmidrule{2-13}
        \multirow{6}{*}{13} &  46&101&19&0&86&S &56&128&20&0&89&S \\ \cmidrule{2-13}
         &  50&94&2&1&86&W &60&119&20&0&89&S \\ \cmidrule{2-13}
         & 54&90&3&0&80&W &64&115&19&0&87&S \\ \cmidrule{2-13}
         &  58&86&3&0&77&W &68&110&19&0&87&S  \\ \cmidrule{2-13}
         &  62&82&2&0&74&W &72&106&19&0&87&S \\ \midrule
         & 42&110&3&0&98&W &52&136&20&0&114&S  \\ \cmidrule{2-13}
         \multirow{6}{*}{31} & 46&101&1&0&92&W &56&128&1&0&110&W \\ \cmidrule{2-13}
         & 50&94&1&0&87&W &60&119&1&0&104&W  \\ \cmidrule{2-13}
         & 54&90&1&0&82&W &64&115&1&0&101&W  \\ \cmidrule{2-13}
         & 58&86&1&0&79&W &68&110&0&1&97&W  \\ \cmidrule{2-13}
         & 62&82&1&0&76&W &72&106&0&1&94&W \\ \bottomrule
    \end{tabular}
\end{table}




\bibliographystyle{model6-num-names}
\bibliography{bili_ref.bib}
\end{document}